\newcommand{\Prop}{\mathbb P}
\newcommand{\Sigprime}{\widehat\Sigma}
\newcommand{\Max}{{\rm max}}
\newcommand{\Sub}{{\rm sub}}
\def\lb{\left\llbracket}
\def\rb{\right\rrbracket}
\newcommand{\val}[1]{\lb #1 \rb}
\newcommand{\ps}{\Diamond}
\newcommand{\cut}{\mathcal{M}^{\rm cut}_{\Sigprime}}
\newcommand{\dlap}[1]{\makebox[0pt]{\hss#1\hss}}
\newcommand{\cd}{%
  \sbox0{$\ps$}%
  \usebox0\kern-.5\wd0\dlap{\raisebox{.1ex}{\scalebox{.7}[1]{$\cdot$}}}\kern.5\wd0%
}
\newcommand{\cb}{%
  \sbox0{$\Box$}%
  \usebox0\kern-.5\wd0\dlap{\raisebox{.2ex}{\scalebox{.7}[1]{$\cdot$}}}\kern.5\wd0%
}
\newcommand{\face}{\varsigma}
\newcommand{\intor }[1]{#1^\circ}
\newcommand{\faceoff }[1]{#1}
\newcommand{\peq}{\preccurlyeq}
\newcommand{\seq}{\succcurlyeq}
\newcommand{\reach}[2]{\gamma(#1,#2)}
\newcommand{\nc}{\Box}
\def\<{\left <}
\def\>{\right >}
\DeclareSymbolFont{AMSb}{U}{msb}{m}{n}
\DeclareMathSymbol{\N}{\mathbin}{AMSb}{"4E}
\DeclareMathSymbol{\Z}{\mathbin}{AMSb}{"5A}
\DeclareMathSymbol{\R}{\mathbin}{AMSb}{"52}
\DeclareMathSymbol{\Q}{\mathbin}{AMSb}{"51}
\DeclareMathSymbol{\I}{\mathbin}{AMSb}{"49}
\DeclareMathSymbol{\C}{\mathbin}{AMSb}{"43}
\newtheorem{exam}[thm]{Example}
\begin{document}

\begin{frontmatter}
  \title{Logics of polyhedral reachability}
  \author{Nick Bezhanishvili}
  \address{University of Amsterdam,  Amsterdam, The Netherlands}
  \author{Laura Bussi}
  \author{Vincenzo Ciancia}
  \address{National Research Council, Pisa, Italy}
  \author{David Fernández-Duque}
  \address{University of Barcelona, Barcelona, Spain}
  \author{David Gabelaia}
  \address{TSU Razmadze Mathematical Institute, Tbilisi, Georgia}

\begin{abstract}
 Polyhedral semantics is a recently introduced branch of spatial modal logic, in which modal formulas are interpreted as piecewise linear subsets of an Euclidean space. Polyhedral semantics for the basic modal language has already been well investigated. However, for many practical applications of polyhedral semantics, it is advantageous to enrich the basic modal language with a reachability modality. Recently, a language with an Until-like spatial modality has been introduced, with demonstrated applicability to the analysis of 3D meshes via model checking. In this paper, we exhibit an axiom system for this logic, and show that it is complete with respect to polyhedral semantics. The proof consists of two major steps: First, we show that this logic, which is built over Grzegorczyk's system $\mathsf{Grz}$, has the finite model property. Subsequently, we show that every formula satisfied in a finite poset is also satisfied in a polyhedral model, thereby establishing polyhedral completeness.
  
  \end{abstract}

  \begin{keyword}
  Spatial logic, topological semantics, polyhedral semantics, completeness.
  \end{keyword}
\end{frontmatter}

\section{Introduction}

\parindent 4mm 

Spatial modal logic is a well-established subdiscipline of modal logic, see e.g., \cite{HBSL}.  Its primary focus lies in reasoning about spatial entities and their interrelations. In the topological semantics of modal logic, the modal operators $\Diamond$ and $\Box$ are interpreted as the topological operators of closure and interior, respectively. The classic result of McKinsey and Tarski  states that the modal logic of all topological spaces  is $\mathsf{S4}$. Moreover,  $\mathsf{S4}$ is the logic of any dense-in-itself metric space \cite{Tarski44}. For modern proofs of this and related topological completeness results, we refer to \cite{vBB07}.

\medskip

Recently, a variant of topological semantics was introduced for polyhedra. Polyhedra can be seen as piecewise linear subsets of an $n$-dimensional Euclidean space. For each polyhedron $P$, one interprets formulas into a Boolean algebra of its subpolyhedra. It is easy to see that the closure of a polyhedron is again a polyhedron, thereby providing a \emph{polyhedral semantics for modal logic}. From the standpoint of domain modeling/language expressiveness, polyhedral semantics is easily seen to encompass 3D meshes (which is the natural application domain). As a matter of expressiveness, digital images can be considered as polyhedral models, especially in contexts such as medical imaging, where pixels and voxels (`volumetric
picture elements') have a dimensionality and are considered hyperrectangles.

\medskip

Polyhedral semantics has been introduced and studied in a sequence of papers \cite{BMMP18,gabelaia2018,DBGM22,DBGM23} encompassing both intuitionistic and modal frameworks. These realms are interconnected through the G\"odel translation and the theory of modal companions \cite{CZ97}. In this paper, we focus specifically on modal logics. It follows from \cite{BMMP18} and \cite{DBGM22} that the modal logic of all polyhedra is Grzegorczyk's modal logic $\mathsf{Grz}$. This is the modal logic of finite posets \cite{CZ97}.
In \cite{DBGM22}, a general criterion for a modal or intermediate logic to be complete for polyhedra, the so-called ``nerve criterion'', has been established, which enabled showing that many well-known modal logics such as $\mathsf{Grz.2}$ and $\mathsf{Grz.3}$ are not polyhedrally complete. On the other hand, Scott's logic and logics axiomatized by Jankov formulas of particular, star-like  trees are polyhedrally complete. The logic of convex polyhedra was studied in \cite{DBGM23}, and the full characterization of polyhedral logics of flat polygons was announced in \cite{gabelaiatacl}.

\medskip

However, for many applications, it is important to enrich the modal language with an Until-like \emph{spatial reachability modality}. A prominent example is the research line on \emph{spatial model checking} (see \cite{CBLM21} and the references therein for a lightweight introduction), where the basic modal language is enhanced with reachability \cite{CLLM16}, and interpreted on finitely representable spaces, such as images, graphs, or polyhedra. Further examples of such applications are shown in~\cite{CGLMV23,CGGLMV23,BCGJLMV24}, where authors define bisimilarity modulo reachability. So far, the methodology has been applied in a variety of application domains, among which we mention medical imaging (see e.g. \cite{BCLM19}) and analysis of video streams \cite{BCGLM22}. 

\medskip

Formally, the reachability modality, which we denote by $\gamma$, is interpreted as follows: $\gamma(\varphi, \psi)$ is true at a point $x$ if \emph{there is a path starting at point $x$ and ending at some point $y$ satisfying $\psi$, and every intermediate point along the path satisfies $\varphi$}. Polyhedral semantics with this modality was investigated in \cite{BCGGLM22}. We will give one illustrative example of the use of this modality. In Figure~\ref{fig:maze}, there is a maze represented as triangulated polyhedron (a finite union of points, segments, triangles and tetrahedra). We can think of the red cube as the actual state, green cubes as safe exits, darker cubes as unsafe passage rooms, and white cubes as safe passage rooms. All these rooms are being connected by corridors. Then, the formula $\texttt{red}\land \gamma(\texttt{red}\lor\texttt{corridor}\lor\texttt{white},\texttt{green})$  is true at the red state if and only if there is a safe exit out of the current state of the maze. For more examples and details, we refer to \cite{BCGGLM22}.

\begin{figure}
	\centering{
	\includegraphics[width=.7\textwidth]{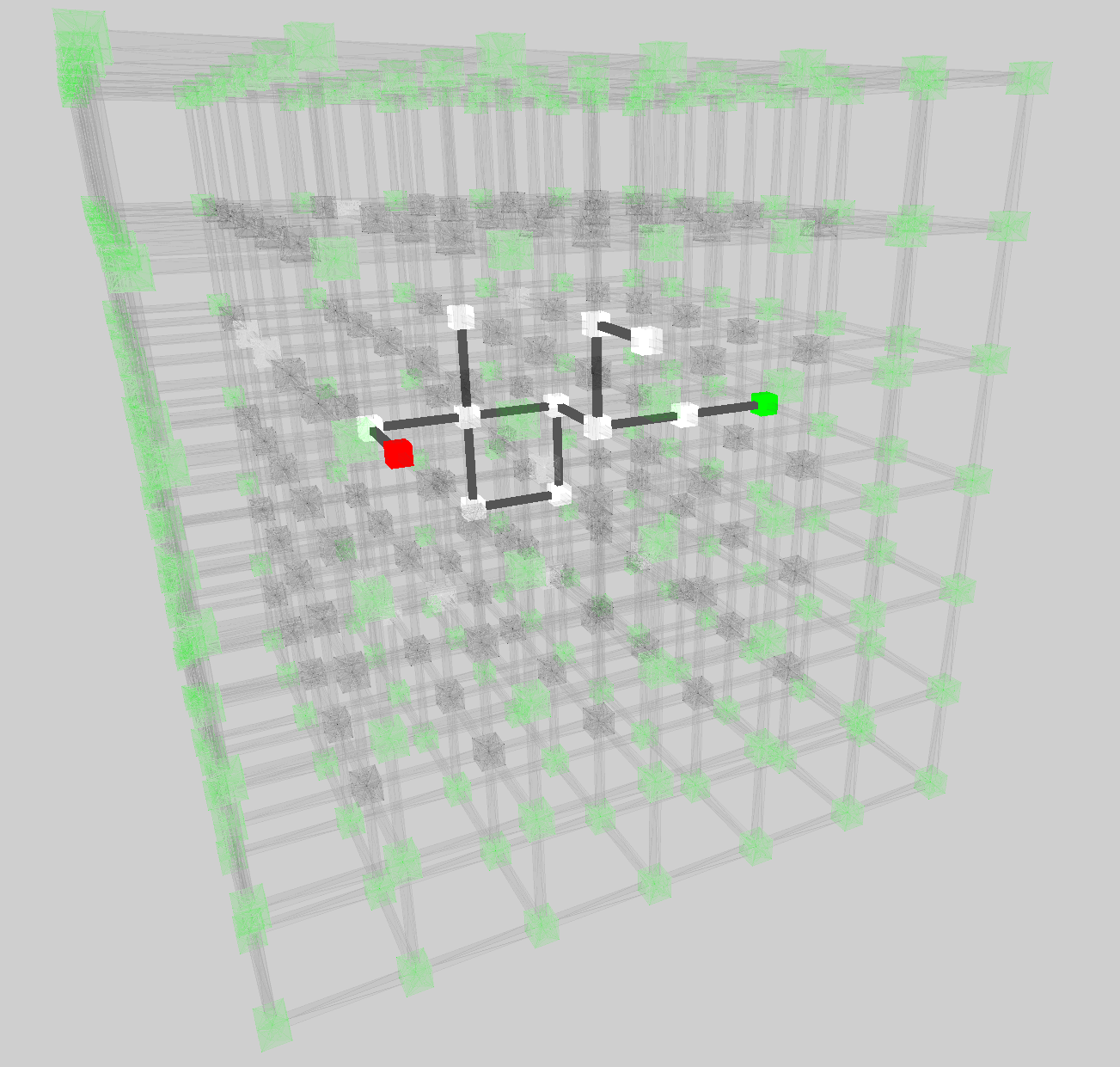}
	}
	\caption{\label{fig:maze}Example of a polyhedral model depicting a maze, a starting area (in red), exit areas (in green), safe places (white), and a `path' witnessing that from the starting area, an exit can be safely reached. All the points in the red area make the formula $\texttt{red}\land \gamma(\texttt{red}\lor\texttt{corridor}\lor\texttt{white},\texttt{green})$ true.}
\end{figure}


In this paper, we introduce the logic $\mathsf{PLR}$, the \emph{polyhedral logic of reachability}. Although in polyhedral semantics, as well as in the posets, the modality $\Diamond$ can be expressed via $\gamma$ with the formula $\gamma(\varphi, \top)$, we still keep $\Diamond$ in the language for convenience. The reachability-free modal fragment of $\mathsf{PLR}$ is the modal logic $\mathsf{Grz}$. We provide axioms for reachability. Our main result states that $\mathsf{PLR}$ is sound and complete with respect to the class of all polyhedra. One of the key insights of polyhedral logics is the fact that all the logical information concerning a polyhedron is encoded in the face posets of its triangulations \cite{DBGM22,BCGGLM22}. 

\medskip

We  prove our main result in stages. To start with, we show that the formula of our language with reachability is satisfiable in a polyhedral model iff it is satisfiable on a finite poset. This is shown using the results of \cite{BMMP18} and \cite{DBGM22} to construct from each finite poset $F$ a nerve poset $N(F)$ consisting of all non-empty chains of $F$ ordered by inclusion. Utilizing the results from \cite{BCGGLM22}, we demonstrate that if $F$ refutes a formula in our language, then so does $N(F)$. Again relying on the results from \cite{BMMP18} and \cite{DBGM22}, we construct a triangulated polyhedron $P$ whose face poset is $N(F)$. We then show that a polyhedral model and its face poset model satisfy precisely the same formulas. This implies that the polyhedron $P$ also refutes the formula refuted on $F$. The converse direction -- that a formula refutable on a polyhedron is refutable on a finite poset -- also follows from the correspondence between polyhedral models and their face poset models, so the reachability logic of polyhedra is shown to be the same as the reachability logic of finite posets. Next we syntactically define a logic $\mathsf{PLR}$. To show that it is sound and complete with respect to finite posets we first prove completeness for an intermediary logic  $\mathsf{ALR}$ (which is based on $\mathsf{S4}$) employing a variant of filtration, and then extend this to $\mathsf{PLR}$ (based on $\mathsf{Grz}$), by the method of cutting clusters in a filtrated model. Combining the above results we obtain that $\mathsf{PLR}$ is sound and complete with respect to polyhedral semantics.

\medskip

The paper is organized as follows: in Section~\ref{sec:back} we provide the reader with background information about topological spaces and polyhedra. Section~\ref{s:semantics} introduces the reachability language and establishes its topological, Kripke and polyhedral semantics. Section~\ref{s:nerve} recalls the key construction of the nerve of a poset, which links polyhedral and Kripke semantics. In Section~\ref{s:logics} we provide an axiomatic definition of the two target logics, $\sf ALR$ and $\mathsf{PLR}$. In Section~\ref{sec:s4} we present our first result, the finite model property of the $\mathsf{S4}$-reachability logic $\sf ALR$. In Section~\ref{sec:grz} we extend this to the finite model property with respect to posets of the $\mathsf{Grz}$-reachability logic $\mathsf{PLR}$ and prove the polyhedral completeness of $\mathsf{PLR}$. In Section~\ref{sec:conc} we draw conclusions and highlight some directions for future work.

\section{Alexandroff spaces and polyhedra}\label{sec:back}

In this section we briefly review some notions from topology, particularly polyhedra, setting up the stage for our semantics.
We begin by reviewing Alexandroff spaces, which provide a link between topological semantics and the familiar Kripke semantics for logics above $\sf S4$.

\subsection{Topological and Alexandroff spaces}

We assume the familiarity with the basic concepts of topology such as topological spaces, closed and open sets, closure and interior, 
basis, etc. We refer to \cite{Engelking} for all these notions. 
Recall that each preordered set (i.e., a set with a reflexive and transitive relation) can be viewed as a special topological space, in which an arbitrary intersection of open sets is open.
Such spaces are known as \emph{Alexandroff spaces}. For a preordered set $(X, R)$ and $x\in X$ we let $R[x] = \{y\in X\mid x R y\}$. Call $U \subseteq X$ an $R$-upset if $R(U) = U$ where $R(U)=\{y\in X\mid \exists x\in U(x R y)\}$. An $R$-downset is defined dually, and for a partially ordered set we
simply say an upset or downset.
The collection $\tau_R$ of all $R$-upsets of $(X, R)$ is an Alexandroff
topology on $X$ such that the closure of a set $U$ is the set $R^{-1} (U) = \{y\in X: R[y]\cap U\neq \varnothing\}$,
and $\{R[x] : x \in X\}$ is a basis for $\tau_R$. 
Conversely, given a topological space $(X, \tau)$ one can define a \emph{specialization order} $R_\tau$ on $X$ by saying that 
$x R_\tau y$ iff every open set containing $x$ also contains $y$. It is well known that if $(X, \tau)$ is an Alexandroff space, 
then $\tau = \tau_{R_\tau}$, and for every preordered set $(X, R)$ we have $R = R_{\tau_R}$. Because of this we will not distinguish 
Alexandroff spaces and preordered sets. Note that any finite topological space is clearly Alexandroff.

\subsection{Simplicial complexes and polyhedra}\label{subsecPoly}

We recall here some definitions about polyhedra. Most of the
definitions are drawn from~\cite{DBGM22,BCGGLM22}.

\begin{definition}
A \emph{d-simplex} $\sigma$ is the convex hull of a finite set $V = \{v_0,\ldots,v_d\} \subseteq \mathbb{R}^m$ of $d + 1$ affinely independent points.
\end{definition}

We recall that $ v_0,\ldots,v_d $ are affinely independent if $v_1-v_0,\ldots,v_d-v_0$ are linearly independent. The number $d$ is said to be the \emph{dimension} of $\sigma$, while the points $v_0,\ldots,v_d$ are said to be its \emph{vertices}.
Note that simplices are bounded, convex and compact subspaces of $\mathbb{R}^n$.

A \emph{face} of $\sigma$ is the convex hull $\tau$ of
a set $T \subseteq \{ v_0,\ldots,v_d \}$, with $T \neq 
\emptyset$: it is straightforward to see that $\tau$
is also a simplex, and we thus have a partial order on simplices given by $\tau \peq \sigma$ if $\tau$ is a face of $\sigma$.

\begin{definition}
The \emph{relative interior} of a simplex $\sigma$ is the set
\[\intor {\sigma} = \{ \Sigma^d_{i=0} \lambda_i v_i \mid \forall i . \ \lambda_i \in (0,1] \ and \ \Sigma^d_{i=0} \lambda_i = 1 \}.\]\end{definition} 

The relative interior of a simplex coincides with its topological interior in the subspace defined by the affine span of the simplex. Any simplex can be partitioned into the relative interiors of its faces. We write $\intor {\sigma} \mathrel{\intor {\peq}} \intor {\tau}$ whenever $\sigma  \peq \tau$, and it is easy to see that this is also a partial order, isomorphic to $\peq$. 

More complex spaces, aptly known as \emph{simplicial complexes,} are obtained by taking finite unions of simplices.

\begin{definition}
A \emph{simplicial complex} $K$ is a finite set of simplices of $\mathbb{R}^n$ such that:
\begin{itemize}
	\item[1.] If $\sigma \in K$ and $\tau \peq \sigma$, then $\tau \in K$;
	\item[2.] If $\sigma, \tau \in K$ with $\sigma \cap \tau\neq\varnothing$, then $\sigma \cap \tau \peq \sigma$.
\end{itemize}
\end{definition}

Simplicial complexes inherit the relations $\intor {\peq}$ and $\peq$.
The dimension of a simplicial complex $K$ is the maximum of the dimensions of its simplices, while the face relation $\peq$ is given by the union of the face relations of the simplices composing $K$. Given a simplicial complex $K$, the \emph{polyhedron} of $K$ (denoted by $|K|$) is the set-theoretic union of its simplices. Most importantly, given a polyhedron $| K |$, each of its points belongs to one and only one of the relative interiors of its simplices, so that the elements of $\intor {K}$ induce a partition of $| K |$:

\begin{lemma}
Each point of $| K |$ belongs to the relative interior of exactly one  simplex in $ K $. That is, $\intor {K} = \{ \intor {\sigma} \mid \sigma \in K \}$ is a partition of $|K|$.
\end{lemma}

$\intor {K}$ is called a \emph{simplicial partition} of $|K|$, and elements of $\intor {K}$ are called \emph{cells}.
Note that the simplicial complex decomposition of a polyhedron is not unique.

We next recall the notion of a \emph{topological path}.

\begin{definition}
A \emph{topological path} in a topological space $X$ is a continuous function $\pi :[0,1] \rightarrow X$, where $[0,1]$ is equipped with the subspace topology of $\mathbb{R}$.
We say that $\pi$ is a path {\em from $a$ to $b$} if $\pi(0)=a$ and $\pi(1)=b$.

If $P=|K|$ for a given simplicial complex $K$ and $\sigma\in K$, we say that $\pi$ {\em traverses $  \sigma$} if there is $x\in (0,1)$ such that $\pi(x)\in \intor \sigma$.
\end{definition}

Topological paths have long been at the core of methods in fields such as algebraic topology and complex analysis.
As we will see, they also give rise to an `until-like' operator in modal logics of space -- the main feature of our logics.

\section{Language and semantics}\label{s:semantics}

In this section, we introduce our logic and its semantics.
The semantics will have three variants, based on Kripke frames, polyhedra, and topological spaces, with the first two being special cases of the third.

\subsection{Language with reachability modality}

Our logics are based on the language generated by the following grammar:

\[
\varphi \equiv \Prop \mid \neg\varphi \mid \varphi\land\psi \mid \Box\varphi \mid \gamma(\varphi,\psi) 
\]

Here, $\Prop$ denotes a set of propositional variables typically assumed to be countably infinite.
We let ${\mathcal L_\gamma}$ denote the set of all formulas of this language.
In this paper, `formula' refers exclusively to elements of $\mathcal L_\gamma$.

\subsection{Topological and polyhedral semantics}

Next we introduce topological semantics for ${\mathcal L_\gamma}$.
As special cases, we obtain polyhedral and Kripke semantics.

\begin{definition}\label{defSem}
A {\em topological model} is a triple $\mathcal X=(X,\tau,\val\cdot)$, where $(X,\tau)$ is a topological space and $\val\cdot\colon\Prop\to 2^{X}$ is a valuation map.
The interior operator for $\tau$ is denoted $\mathcal I$.

Given a topological model $\mathcal X=(X,\tau,\val\cdot)$ we extend $\val\cdot$ to all formulas $\varphi\in{\mathcal L_\gamma}$ via the satisfaction relation $\models$ by stipulating $\val\varphi=\{x\in X\mid \mathcal{X},x \models\varphi\}$ and using the following recursive definition:

\begin{itemize}
	\item $\mathcal{X},x \models p \iff x \in \val p$
	\item $\mathcal{X},x \models \lnot \varphi \iff \mathcal{X},x \not \models \varphi$
	\item $\mathcal{X},x \models \varphi_1 \land \varphi_2 \iff \mathcal{X},x 
	\models \varphi_1$ and $\mathcal{X},x \models \varphi_2$
	\item $\mathcal{X},x \models \nc \varphi \iff x \in \mathcal{I} (\llbracket 
	\varphi \rrbracket)$
	\item $\mathcal{X},x \models \reach \varphi \psi \iff$ there exists a topological path $\pi$ 
	such that $\pi(0) = x$, $\pi(1) \in \llbracket \psi \rrbracket$ and $\pi[(0,1)] 
	\subseteq \llbracket \varphi \rrbracket$
\end{itemize} 
\end{definition}

On occasion we may write $x\models\varphi$ instead of $\mathcal X,x\models\varphi$ when $\mathcal X$ is clear from context.
With this, we may define polyhedral models as a special case of topological models.

\begin{definition}
A {\em polyhedral model} is a pair $\mathcal X=(K,\val\cdot)$, where $K$ is a simplicial complex and $\val\cdot\colon \Prop\to 2^{ K }$. We identify $\mathcal X$ with the topological model $\mathcal X_{\rm top} =  (|K|,
\tau_K,\val\cdot_{\rm top})$, where $\tau_K$ is the subspace topology on $|K|$ inherited from $\mathbb R^n$ and $\val p_{\rm top} = \bigcup_{\sigma \in \val p}\intor  \sigma$ for every $p\in\Prop$. 
\end{definition}

It is easily observed that if $\mathcal{X} = ( K, \val\cdot)$ and $\mathcal{X'} = ( K', \val\cdot)$ are two models with $|K|=|K'|$ (and sharing their propositional valuation), then for each $x \in |K|$ and $\varphi\in{\mathcal L_\gamma}$ we have that $\mathcal{X}, x \models \varphi \iff \mathcal{X'}, x \models \varphi$.

\begin{definition}
An {\em Alexandroff model} is a triple $\mathcal M=(W,\peq,\val\cdot)$, where $W$ is any set, $\peq$ is a preorder on $W$, and $\val\cdot\colon\Prop\to 2^W$.

An Alexandroff model $\mathcal M=(W,\peq,\val\cdot)$ is a {\em finite poset model} if $\peq$ is a partial order and $W$ is finite.

The valuation $\val\varphi$ is extended to arbitrary formulas $\varphi$ by identifying $\mathcal M$ with the topological model obtained by equipping $W$ with the upset topology.
\end{definition}

As usual, we write $w\prec v$ if $w\peq v$ and $v\not\peq w$.
Our Alexandroff models are simply $\sf S4$ models, but we use this terminology to stress that $\sf S4$ frames coincide with Alexandroff spaces equipped with their specialization preorder, allowing us to apply Definition~\ref{defSem} to Alexandroff models.
However, in this setting it will also be convenient to characterize the semantics of $\reach\cdot\cdot$ without an explicit reference to continuous paths.

\begin{definition}
Given an Alexandroff model $\mathcal M=(W,\peq,\val\cdot)$ and a formula $\varphi$, we define the {\em reachability relation} $R^\varphi$ to be the least relation so that $w\mathrel R^\varphi v$ if there is $u \in \val \varphi$ such that either

\begin{enumerate}
	\item $w\peq u \seq v$, or
	\item $w \mathrel R^\varphi u$ and $u \mathrel R^\varphi v$.
\end{enumerate}
\end{definition}

This relation can equivalently be presented in terms of {\em up-down paths.}

\begin{definition} 
Let $\mathcal{M } =(W,\peq,V) $ be an Alexandroff model.
A sequence $(w_0,\ldots,w_k) \subseteq W$ is said to be an 
{\em up-down path} if $k = 2j$ for some $j>0$, $w_0  \peq  w_1$,  $ w_{k-1}  \seq  w_k$, and whenever $0< i<j$, we have that $w_{2i-1} \succ w_{2i}\prec w_{2i+1}$.
\end{definition}

Thus, an up-down path is a path $w_0\peq w_1\succ w_2\prec w_3\succ\ldots \prec w_{k-1}\seq w_k$.

\begin{lemma}\label{lemmUDR}
Given a finite Alexandroff model $\mathcal{M } =(W,\peq,\val\cdot) $, $w,v\in W$, and a formula $\varphi$, the following are equivalent: 
\begin{enumerate}

\item $w\mathrel R^\varphi v$.

\item There is an up-down path $(w_0,\ldots,w_k) \subseteq W$ such that $w_0=w$, $w_k=v$, and for all $i\in(0,k)$, $w_i\in \val\varphi$.

\item There is a topological path $\pi\colon [0,1]\to W$ such that $\pi(0) =w$, $\pi(1) =v$, and for all $t\in (0,1) $, $\pi(t)\in \val\varphi$.

\end{enumerate}
\end{lemma}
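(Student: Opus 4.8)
The plan is to treat statement (2), the combinatorial up-down path, as a hub: I will establish $(1)\Leftrightarrow(2)$ by induction and $(2)\Leftrightarrow(3)$ by passing between discrete fences and genuine continuous paths. For $(2)\Rightarrow(1)$ I read the up-down path $w_0\peq w_1\succ w_2\prec\cdots\seq w_k$ two edges at a time: each triple $w_{2i}\peq w_{2i+1}\seq w_{2i+2}$ exhibits the peak $w_{2i+1}\in\val\varphi$ as a common upper bound, so clause~1 gives $w_{2i}\mathrel R^\varphi w_{2i+2}$; since the even-indexed points $w_2,w_4,\ldots$ are intermediate and hence lie in $\val\varphi$, clause~2 chains these into $w_0\mathrel R^\varphi w_k$. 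For $(1)\Rightarrow(2)$ I induct on the definition of $R^\varphi$: the base case $w\peq u\seq v$ yields the path $(w,u,v)$, and in the inductive case I concatenate the up-down paths for $w\mathrel R^\varphi u$ and $u\mathrel R^\varphi v$, observing that the shared point $u\in\val\varphi$ becomes an interior valley of the joined path. A minor cleanup is needed here: the resulting fence may contain equalities where the definition asks for the strict steps $w_{2i-1}\succ w_{2i}\prec w_{2i+1}$, so I collapse repeated vertices and merge monotone runs to obtain a genuine up-down path with the same endpoints and with all retained interior vertices still in $\val\varphi$.

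For $(2)\Rightarrow(3)$ I build $\pi$ from the two elementary continuous moves available in an Alexandroff space: if $a\peq b$ then $\pi(0)=a$, $\pi((0,1])=\{b\}$ is continuous, since the preimage of any open up-set is open in $[0,1]$, and its time-reversal realises a step $a\seq b$. Rescaling one such move onto each edge of the up-down path and pasting along the closed subintervals of a partition of $[0,1]$ (continuity at the junctions holds because adjacent moves agree there) produces a path whose interior values are exactly $w_1,\ldots,w_{k-1}$, all in $\val\varphi$, with $\pi(0)=w$ and $\pi(1)=v$.

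The main work is $(3)\Rightarrow(2)$, turning a genuine continuous path into a finite fence. The key observation is that for each $x$ the set $\{y:x\peq y\}$ is the least open neighbourhood of $x$, so the sets $\pi^{-1}(\{y:\pi(t)\peq y\})$ form an open cover of the compact metric space $[0,1]$. I first trim the endpoints: choosing $\epsilon>0$ with $[0,\epsilon]\subseteq\pi^{-1}(\{y:w\peq y\})$ and symmetrically near $1$ gives a single up-step $w\peq\pi(\epsilon)$ and a single down-step $\pi(1-\epsilon)\seq v$. On the middle interval $[\epsilon,1-\epsilon]$, whose image lies entirely in $\val\varphi$, I take a Lebesgue number of the cover by the neighbourhoods $\pi^{-1}(\{y:\pi(t)\peq y\})$ with $t\in[\epsilon,1-\epsilon]$ and a partition into pieces shorter than it; each piece maps into some $\{y:x_j\peq y\}$ with $x_j=\pi(r_j)$ and $r_j\in(0,1)$, so the witnessing lower bound $x_j$ is itself in $\val\varphi$. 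The sampled values at the partition points are peaks and the $x_j$ are valleys, yielding an up-down fence from $\pi(\epsilon)$ to $\pi(1-\epsilon)$ with every vertex in $\val\varphi$; sandwiching it between the two trimming steps and applying the same strictness cleanup as before gives the required up-down path from $w$ to $v$.

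The delicate point throughout is precisely this last argument. A continuous path into a finite space need not be piecewise constant (it may meet a closed point on an arbitrary closed set), so one cannot literally enumerate the cells it traverses; the Lebesgue-number discretisation is what replaces such an enumeration. Confining the valley witnesses to $\val\varphi$ is what forces the two-stage treatment, first isolating the interior where the whole image already lies in $\val\varphi$ and only afterwards attaching the endpoints, and I expect this to be the step requiring the most care.
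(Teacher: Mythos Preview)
Your argument is correct. For $(1)\Leftrightarrow(2)$ you proceed essentially as the paper does, and for $(2)\Rightarrow(3)$ your edge-by-edge pasting is equivalent to the paper's slightly more compact construction (place each valley $w_{2i}$ at the single point $i/j$ and each peak $w_{2i+1}$ on the whole open interval $(i/j,(i+1)/j)$, which is visibly continuous). The genuine divergence is in $(3)\Rightarrow(2)$. The paper argues via connectedness: $\pi[(0,1)]$ is a connected subset of $W$ contained in $\val\varphi$; the endpoints $w,v$ lie in its closure and hence admit successors $w_1,v_1\in\pi[(0,1)]$; finally an external lemma is invoked stating that any two points of a connected subspace of an Alexandroff space are joined by a zigzag lying entirely in that subspace. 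Your compactness/Lebesgue-number discretisation is a genuinely different route: it never leaves $[0,1]$, produces the fence directly from a fine partition, and is fully self-contained, at the price of the explicit endpoint trimming and the strictness cleanup (which the paper's sketch also needs but leaves implicit). The paper's route is shorter once the cited connectedness lemma is granted; yours avoids the external citation and makes the mechanism that turns a continuous path into a finite zigzag completely explicit.
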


\begin{proof}
We prove that the second item implies the first; the other direction is similar.
Suppose that $(w_0,\ldots,w_k) \subseteq W$ is an up-down path such that for all $i\in(0,k)$, $w_i\in \val\varphi$.
We prove by induction on $k$ that $w_0\mathrel R^\varphi w_k $.
 
We know that $w_1\seq w_2$, so that $w_0\peq w_1\seq w_2$ witnesses $w_0\mathrel R^\varphi w_2$ by the `base case' of the definition of $R^\varphi$.
If $k=2$ we are done, otherwise $k>2$ and the induction hypothesis yields $w_2\mathrel R^\varphi w_k$.
Since in this case $w_2\in\val\varphi$, this witnesses the inductive clause for $R^\varphi$, and once again $w_0\mathrel R^\varphi w_k $.

To show that (ii) implies (iii) suppose that $(w_0,\ldots,w_{2k})$ is an up-down path. To build a corresponding topological path, consider points $\frac{i}{k}\in[0,1]$ with $0\leq i\leq k$ and let $\pi(x)=w_{2i}$ if $x=\frac{i}{k}$ and $\pi(x)=w_{2i+1}$ if $x\in\left(\frac{i}{k},\frac{i+1}{k}\right)$. Then $\pi$ is continuous and satisfies the requirements of (iii).

For the other direction suppose $\pi:[0,1]\to W$ is a topological path with $\pi(0)=w$, $\pi(1)=v$ and $\pi[(0,1)]\subseteq\val\varphi$. Since $(0,1)$ is a connected subspace of $[0,1]$ and $\pi$ is continuous, $\pi[(0,1)]$ is a connected subspace of the Alexandroff space $W$. Moreover, it follows from the continuity of $\pi$ that $w=\pi(0)$ and $v=\pi(1)$ are in the closure of $\pi[(0,1)]$. Hence there exist  $w_1,v_1\in \pi[(0,1)]$ with $w\peq w_1$ and $v\peq v_1$.
It is not hard to check that if $A$ is a connected subspace of an Alexandroff space, then any two elements of $A$ are connected via a zigzag path \cite[Lemma 3.4]{Bezh-Gab-2011}.
In particular, the connectedness of $\pi[(0,1)]\subseteq\val\varphi$ ensures now the existence of an up-down path as required in (ii).
\end{proof}

Given this context, up-down paths are closely related to topological paths.
To further utilize this connection, first we define a simplicial model which is a Kripke companion to a given polyhedral model.

\begin{definition} 
For a polyhedral model $\mathcal{X} = (K,\val\cdot)$, the {\em Kripke companion of $\mathcal{X}$} is $\mathcal{M(X)} = (\faceoff {K}, \faceoff {\peq}, \faceoff {\val\cdot})$, where simplices of $K$ are regarded as points, $\faceoff {\peq} $ is the face relation of Section~\ref{subsecPoly}, and ${\val{p}}$ is unchanged.

We define $\face  \colon |K| \to \faceoff K $ by letting $\face (x) $ be the unique $\faceoff \sigma\in \faceoff K$ such that $x\in\intor \sigma$.
\end{definition}

\begin{exam}
Let us illustrate with a simple example of a polyhedron consisting of a single triangle with vertices $a,b,c$, including its interior.
Then, $|K|$ consists of the set of points in the closed triangle, while $K=\{\{a\},\{b\},\{c\}, \overline{bc},\overline{ac},\overline{ab},|K|\}$, where $\overline{x y}$ denotes the closed line segment from $x$ to $y$.
For $x\in |K|$, $\face(x)$ is the smallest face to which $x$ belongs.
Thus for example, $\face(a) = \{a\}$ since vertices are `singleton faces', $\face(x) = \overline{ab}$ if $x$ is {\em strictly } between $a$ and $b$, and $\face (x) = |K|$ only when $x$ is in the topological interior of $|K|$, i.e.~if it does not lie on any of the edges of our triangle.
\end{exam}

\begin{lemma}\label{lemmUDS}
Let $\mathcal{X} = (K,\val\cdot)$ be a polyhedral model.
Then, the map $\face   \colon |K| \to \faceoff K $ is continuous with respect to the Alexandroff topology on $K$.

Moreover, if $x,y\in |K|$ and $N\subseteq K$, the following are equivalent:
\begin{enumerate}

\item\label{itUDSOne} There is a topological path from $x$ to $y$ traversing all the elements of $N$.

\item\label{itUDSTwo} There is an up-down path $(\sigma_0,\ldots,\sigma_n)$ from $\face (x)$ to $\face (y)$ such that $N = \{\sigma_1,\ldots,\sigma_n\}$.

\end{enumerate}
\end{lemma}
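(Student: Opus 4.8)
The plan is to prove the two assertions separately, establishing the equivalence by a direct geometric construction in one direction and a combinatorial extraction in the other, both organised around the face map $\face$.

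First I would establish continuity of $\face\colon |K|\to \faceoff K$. Since the Alexandroff topology on $K$ is generated by the minimal open neighbourhoods $U_\sigma=\{\rho\in K\mid \sigma\peq\rho\}$, it suffices to check that each $\face^{-1}(U_\sigma)$ is open in $|K|$. Unwinding the definitions, $\face^{-1}(U_\sigma)=\{x\in |K|\mid \sigma\peq\face(x)\}=\bigcup\{\intor\tau\mid \sigma\peq\tau\}$, i.e.\ the open star of $\sigma$. I would then observe that its complement is $\bigcup\{\intor\tau\mid \sigma\not\peq\tau\}$, which coincides with the polyhedron of the subcomplex $\{\tau\in K\mid \sigma\not\peq\tau\}$ (this is a subcomplex because $\{\tau\mid\sigma\not\peq\tau\}$ is downward closed for $\peq$, using transitivity), and is therefore closed in $|K|$. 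Hence the open star is open and $\face$ is continuous.

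For the direction \ref{itUDSTwo}$\Rightarrow$\ref{itUDSOne} I would build the path by hand. Given the up-down path $(\sigma_0,\dots,\sigma_n)$ with $\sigma_0=\face(x)$ and $\sigma_n=\face(y)$, choose representatives $q_i\in\intor{\sigma_i}$ for $0<i<n$ and set $q_0=x$, $q_n=y$. For each $i$ the simplices $\sigma_i,\sigma_{i+1}$ are $\peq$-comparable, so one is a face of the other; joining $q_i$ to $q_{i+1}$ by the straight segment and invoking the elementary convexity fact that the segment from a point of a proper face of a simplex to a point of its relative interior lies, apart from the face endpoint, in the relative interior, I see that the open part of each segment lies in the relative interior of the larger of $\sigma_i,\sigma_{i+1}$. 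Concatenating these segments and reparametrising over $[0,1]$ yields a continuous path from $x$ to $y$. Checking which cells are traversed, every peak is met along an open subinterval and every interior valley $\sigma_i$ is met at the corresponding junction time, which lies in $(0,1)$, whereas $\sigma_0$ and $\sigma_n$ are visited only at the endpoints; the set of traversed simplices is thus exactly the set of intermediate cells of the up-down path, as required.

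For \ref{itUDSOne}$\Rightarrow$\ref{itUDSTwo} I would compose with the face map. Given a topological path $\pi$ from $x$ to $y$ traversing exactly the cells in $N$, continuity of $\face$ makes $\bar\pi=\face\circ\pi\colon[0,1]\to K$ a topological path in the Alexandroff space $K$ with $\bar\pi(0)=\face(x)$, $\bar\pi(1)=\face(y)$ and $\bar\pi[(0,1)]=N$. Because $K$ is a finite poset carrying the up-set topology, a map into it is lower semicontinuous for $\peq$, and one can subdivide $[0,1]$ by finitely many points $0=t_0<\dots<t_m=1$ so that $\bar\pi$ is constant, equal to some $p_j$, on each $(t_{j-1},t_j)$, while $\bar\pi(t_j)=v_j\peq p_j,p_{j+1}$. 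The fence $v_0,p_1,v_1,\dots,p_m,v_m$ then yields, after deleting the trivial (equality) steps, a genuine up-down path from $\face(x)$ to $\face(y)$ whose intermediate cells are precisely $\{p_1,\dots,p_m\}\cup\{v_1,\dots,v_{m-1}\}=\bar\pi[(0,1)]=N$.

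I expect the bookkeeping in this last step to be the main obstacle. The engine behind it is essentially Lemma~\ref{lemmUDR}, but that lemma only guarantees an up-down path whose intermediate cells are \emph{contained} in $N$; the real content here is \emph{surjectivity} onto $N$, namely that every traversed cell can be made to appear, which is why I would argue through the explicit normal form of $\bar\pi$ rather than quote Lemma~\ref{lemmUDR} as a black box. The geometric counterpart in the converse direction is precisely the convexity fact about segments running from a face into the relative interior, which is what certifies that the constructed path traverses no unintended cells.
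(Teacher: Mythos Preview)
Your arguments for continuity and for \ref{itUDSTwo}$\Rightarrow$\ref{itUDSOne} are correct and parallel the paper's (you check basic opens where the paper checks closed sets; you spell out the full piecewise-linear path where the paper treats a single up-down step $\sigma_0\peq\sigma_1\seq\sigma_2$ and concatenates).

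The gap is in \ref{itUDSOne}$\Rightarrow$\ref{itUDSTwo}. Your normal-form claim --- that a continuous $\bar\pi\colon[0,1]\to K$ into a finite poset with the upset topology is piecewise constant with \emph{finitely many} pieces --- is false. Take $K$ to consist of two edges $a,b$ sharing a vertex $c$, and let $\pi\colon[0,1]\to|K|$ oscillate between the two edges, returning to $c$ at the times $1/n$ with amplitudes shrinking to~$0$. This $\pi$ is continuous, yet $\face\circ\pi$ takes each of the values $a,b,c$ on infinitely many components, so no finite subdivision of the kind you describe exists. Continuity into an Alexandroff space only gives lower semicontinuity for $\peq$ (each $\bar\pi^{-1}({\uparrow}\sigma)$ is open); it does not bound the number of value changes. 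A repair close in spirit to your idea: since $N=\bar\pi[(0,1)]$ is finite, choose one witness time $s_\sigma\in(0,1)$ for each $\sigma\in N$, list them as $0<s_1<\cdots<s_m<1$, and apply Lemma~\ref{lemmUDR} on each subinterval $[s_j,s_{j+1}]$ (and on $[0,s_1]$, $[s_m,1]$) to obtain up-down segments with intermediate cells contained in $N$; concatenating these, inserting trivial $\peq$/$\seq$ steps at the joins to restore the alternation pattern, yields an up-down path whose intermediate cells are contained in $N$ and include every $\sigma\in N$ (each appears at a junction), hence equal $N$. The paper's own treatment of this direction is in fact terser than yours, so your instinct that surjectivity onto $N$ needs more than a black-box appeal to Lemma~\ref{lemmUDR} is well placed; only the particular mechanism you chose does not go through.
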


\begin{proof}
First we check that $\face $ is continuous.
For this it suffices to show that if $C\subseteq K$ is closed, then $\face^{-1}[C] =\{x\in |K|:x\in \bigcup \intor C\}$ is closed.
We have that $C$ is closed iff it is downward-closed under $\peq$, i.e.~if $\tau\peq \sigma\in C$ then $\tau\in C$.
From this it follows that if $\sigma\in C$ then $\sigma\subseteq \bigcup \intor C$, as if $x\in \sigma$ it follows that $x\in \intor \tau$ for some $\tau\peq \sigma$ and, as observed, $\tau\in C$.
Since $C$ is closed, we thus have that $\face ^{-1}[C] = \bigcup \intor C = \bigcup   C$, and since the latter is closed, so is the former.

From this it is immediate that \eqref{itUDSOne} implies \eqref{itUDSTwo}, since if $\pi\colon [0,1] \to |K|$ is a topological path then $\face\circ\pi\colon [0,1] \to K$ is also a topological path (being continuous) and clearly the two traverse the same faces.
For the converse, it suffices to consider the case where $N$ has three (possibly repeating) elements $\sigma_0\peq \sigma_1\seq \sigma_2$ and to find a path from $x\in \intor \sigma_0 $ to $y\in \intor \sigma_2 $ traversing only $ \sigma_1$, as such paths can then be strung together inductively.
Choose $z\in \intor \sigma_1$; then, it can easily be checked that the piecewise linear path $\pi$ with $\pi(0)=x$, $\pi(\nicefrac 12) = z$ and $\pi(1) = y$ has the required properties.
\end{proof}

Combining lemmas~\ref{lemmUDR} and \ref{lemmUDS}, we see that the semantics of $\reach\cdot\cdot$ for a polyhedral model coincides with that of its Kripke companion, in the following sense.

\begin{lemma}\label{lemmPoltoK}
Let $\mathcal{X} = (K,\val\cdot)$ be a polyhedral model and $x \in |K|$.
Then, for every formula $\varphi$, we have that $\mathcal{X}, x \models \varphi \iff \mathcal{M(X)}, \face(x) \models \varphi$.
\end{lemma}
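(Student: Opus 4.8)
The plan is to prove the equivalence by induction on $\varphi$, or equivalently to establish that $\val\varphi^{\mathcal X}=\face^{-1}[\val\varphi^{\mathcal{M(X)}}]$ for every formula. The propositional and Boolean cases are routine: by definition $\val p^{\mathcal X}=\bigcup_{\sigma\in\val p}\intor\sigma$, and since $\face(x)$ is the unique simplex with $x\in\intor{\face(x)}$, we get $x\in\val p^{\mathcal X}$ iff $\face(x)\in\val p=\val p^{\mathcal{M(X)}}$, while the connectives commute with the induction hypothesis. For $\nc\varphi$ I would use that $\face$ is an \emph{interior map}. Continuity is Lemma~\ref{lemmUDS}, and $\face$ is open because whenever $\sigma\peq\tau$ every point of $\intor\sigma$ lies in the closure of $\intor\tau$, so the $\face$-image of an open set is upward closed, i.e.\ open in the Alexandroff topology on $K$. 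A continuous open map transfers the interior operator: rewriting $\val\varphi^{\mathcal X}=\face^{-1}[\val\varphi^{\mathcal{M(X)}}]$ by the induction hypothesis, continuity yields one inclusion between the two interiors and openness the reverse, so $x\models\nc\varphi$ in $\mathcal X$ iff $\face(x)\models\nc\varphi$ in $\mathcal{M(X)}$.

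The heart of the argument is the case $\reach\varphi\psi$, where I would combine Lemmas~\ref{lemmUDR} and~\ref{lemmUDS}, using the induction hypothesis to identify the $\varphi$- and $\psi$-regions of the two models along $\face$. For the forward implication, suppose a topological path $\pi$ in $|K|$ witnesses $\mathcal X,x\models\reach\varphi\psi$. Then $\face\circ\pi$ is a topological path in $\mathcal{M(X)}$ (by continuity of $\face$) with $(\face\circ\pi)(0)=\face(x)$, with $(\face\circ\pi)(1)=\face(\pi(1))\in\val\psi^{\mathcal{M(X)}}$ since $\pi(1)\in\val\psi^{\mathcal X}=\face^{-1}[\val\psi^{\mathcal{M(X)}}]$, and with $(\face\circ\pi)[(0,1)]\subseteq\val\varphi^{\mathcal{M(X)}}$ for the analogous reason. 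By the semantic clause for $\reach\cdot\cdot$ this witnesses $\mathcal{M(X)},\face(x)\models\reach\varphi\psi$.

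For the converse, Lemma~\ref{lemmUDR} turns $\mathcal{M(X)},\face(x)\models\reach\varphi\psi$ into an up-down path $(\sigma_0,\ldots,\sigma_k)$ with $\sigma_0=\face(x)$, interior vertices $\sigma_1,\ldots,\sigma_{k-1}\in\val\varphi^{\mathcal{M(X)}}$, and $\sigma_k\in\val\psi^{\mathcal{M(X)}}$. Choosing $y\in\intor{\sigma_k}$ (so $y\in\val\psi^{\mathcal X}$) and feeding this up-down path into the construction of Lemma~\ref{lemmUDS} yields a topological path $\pi$ from $x$ to $y$ whose open part $\pi[(0,1)]$ meets only the relative interiors of $\sigma_1,\ldots,\sigma_{k-1}$, all of which lie in $\val\varphi^{\mathcal X}=\face^{-1}[\val\varphi^{\mathcal{M(X)}}]$. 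Hence $\pi$ witnesses $\mathcal X,x\models\reach\varphi\psi$.

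The step I expect to be delicate is precisely this matching in the converse. The semantics of $\reach\cdot\cdot$ treats the open interior of a path (which must satisfy $\varphi$) differently from its two endpoints (the source, and the target which must satisfy $\psi$), so I must check that the correspondence of Lemma~\ref{lemmUDS} respects this split: that the cells traversed strictly inside $\pi$ are exactly the interior vertices $\sigma_1,\ldots,\sigma_{k-1}$ and never the endpoints $\sigma_0$ or $\sigma_k$. This is where I rely on the fine structure of the piecewise-linear path built in Lemma~\ref{lemmUDS}, which passes through $\intor{\sigma_i}$ for $0<i<k$ at interior times while reaching $\intor{\sigma_0}$ and $\intor{\sigma_k}$ only at $t=0$ and $t=1$; getting this boundary bookkeeping exactly right, rather than the routine interior-operator and Boolean cases, is the crux.
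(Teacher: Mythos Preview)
Your proposal is correct and follows essentially the same approach as the paper, which simply states that the lemma is obtained by ``combining Lemmas~\ref{lemmUDR} and~\ref{lemmUDS}'' without spelling out the induction. You supply the details the paper omits: the $\nc$ case via the observation that $\face$ is open (hence an interior map), and the careful endpoint bookkeeping for $\reach\cdot\cdot$. Your caution about the latter is warranted, since Lemma~\ref{lemmUDS} as stated has $N=\{\sigma_1,\ldots,\sigma_n\}$ including the final index; but, as you note, the piecewise-linear construction in its proof in fact produces a path whose open interval lies in $\bigcup_{0<i<k}\intor{\sigma_i}$, which is exactly what the semantics of $\reach\cdot\cdot$ requires.
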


It follows readily from Lemma~\ref{lemmPoltoK} that any formula satisfiable on a polyhedral model is also satisfiable on a finite poset model.
The converse is also true, but the transformation from Kripke models to polyhedral ones requires an extra step, as we see in the next section.

\section{The nerve of a Kripke model}\label{s:nerve}

Not every finite poset model $\mathcal M$ is of the form $\mathcal M(\mathcal X)$ for some polyhedral model $\mathcal X$, but it is possible to transform $\mathcal M$ into a new model which {\em does} have this property.
This new model is the {\em nerve} of $\mathcal M$.

\begin{definition}
Given a poset $W$, its \emph{nerve}, $N(W)$, is the collection of finite non-empty chains in $W$ ordered by set-theoretic inclusion. We define a function $\Max:N(W)\to W$ sending each element of $N(W)$ to its maximal element and use this map to `pull back' the valuation from $W$ to $N(W)$; to be precise, we set $\val{p}^N = \{ c \mid \Max(c) \in \val p \}$.
The {\em nerve model} of $\mathcal M$ is then the model $N(\mathcal M) = (N(W),\subseteq,\val\cdot^N)$.
\end{definition}

The map $\Max:N(W)\to W$ mapping a chain to its maximal element is a p-morphism \cite{DBGM22}.
We can always find a polyhedral model $\mathcal X$ such that $N(\mathcal M) = \mathcal M(\mathcal X)$, as shown, e.g.~in \cite{DBGM22}.

\begin{theorem}\label{theoNerve}
If $\mathcal M$ is any finite poset model then there exists a polyhedral model $\mathcal X$ such that $N(\mathcal M) = \mathcal M(\mathcal X)$.
\end{theorem}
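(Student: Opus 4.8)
The plan is to realize the nerve $N(\mathcal M)$ as the face poset of an explicitly constructed geometric simplicial complex, via the classical geometric realization of an abstract simplicial complex. Viewed combinatorially, $N(W)$ is the \emph{order complex} of the poset: its vertices are the singleton chains, one for each $w\in W$, and its abstract simplices are precisely the non-empty chains of $(W,\peq)$, with faces given by sub-chains. Since spanning a $d$-simplex requires $d+1$ affinely independent points, I would first fix an affine embedding of the vertex set: writing $W=\{w_1,\dots,w_m\}$, choose $m$ affinely independent points $p_1,\dots,p_m$ in some $\mathbb R^n$ (for instance $0,e_1,\dots,e_{m-1}\in\mathbb R^{m-1}$), and associate $p_i$ to $w_i$.

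Next I would define the complex. For each non-empty chain $c=\{w_{i_0}\prec\cdots\prec w_{i_d}\}$, let $\sigma_c$ be the convex hull of $\{p_{i_0},\dots,p_{i_d}\}$; as this vertex set is affinely independent, $\sigma_c$ is a genuine $d$-simplex. Put $K=\{\sigma_c\mid c\text{ a non-empty chain of }W\}$. Two axioms must then be checked. Closure under faces is immediate: a face of $\sigma_c$ is spanned by a non-empty subset of its vertices, whose preimage is a sub-chain $c'\subseteq c$, so $\sigma_{c'}\in K$. The intersection axiom is the crux, and the step I expect to be the main obstacle: I must show that if $\sigma_c\cap\sigma_{c'}\neq\varnothing$ then this intersection is a common face. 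This is where the global affine independence of $\{p_1,\dots,p_m\}$ does the work: every point of the ambient simplex has unique barycentric coordinates, so a point lies in $\sigma_c$ iff its support is contained in the vertices of $\sigma_c$; hence $\sigma_c\cap\sigma_{c'}$ is exactly the convex hull of the common vertices, i.e.\ $\sigma_{c\cap c'}$, which is a face of both (and empty precisely when $c\cap c'=\varnothing$). Thus $K$ is a simplicial complex.

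It then remains to match the two models. The assignment $c\mapsto\sigma_c$ is an order-isomorphism from $(N(W),\subseteq)$ onto the face poset $(\faceoff{K},\faceoff{\peq})$, since $\sigma_{c'}\mathrel{\faceoff{\peq}}\sigma_c$ iff the vertices of $\sigma_{c'}$ are among those of $\sigma_c$ iff $c'\subseteq c$. Transporting the valuation along this isomorphism, I set $\sigma_c\in\val p$ exactly when $\Max(c)\in\val p$, which under the identification is the definition of $\val p^N$. Since the Kripke companion $\mathcal M(\mathcal X)=(\faceoff{K},\faceoff{\peq},\val\cdot)$ keeps the valuation on simplices unchanged, we obtain $\mathcal M(\mathcal X)=N(\mathcal M)$ up to the canonical isomorphism $c\mapsto\sigma_c$, as required. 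Everything beyond the intersection axiom is routine bookkeeping, and since the whole construction is simply the standard geometric realization of the order complex, the result could alternatively be quoted from \cite{DBGM22}.
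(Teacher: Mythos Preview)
Your construction is correct: this is precisely the standard geometric realization of the order complex of a finite poset, and your verification of the two simplicial-complex axioms (in particular, the use of global affine independence to get unique barycentric coordinates and hence $\sigma_c\cap\sigma_{c'}=\sigma_{c\cap c'}$) is the right argument. Note, however, that the paper does not actually prove this theorem at all; it simply quotes it from \cite{DBGM22}, exactly as you suggest in your final sentence. So there is no ``paper's own proof'' to compare against, and your write-up supplies the details that the paper deliberately omits by citation.
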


The function   $\Max$ will preserve the truth of formulas in our language.
To make this precise, we must introduce the notion of up-down morphism.

\begin{definition} 
Let $\mathcal M =(W,\peq,\val\cdot)$, $\mathcal M'=(W',\peq',\val\cdot')$ be finite poset models. A function $f: W \rightarrow W'$ is an {\em up-down morphism} if:
\begin{itemize}
	\item (atom) if $w \in W$, $w \in \val p$, then $f(w) \in \val p'$;

	\item (forth) if $w \peq u \seq v$, then $f(w) \peq f(u) \seq f(v)$, and

	\item (back) if $f(w) \peq' u' \seq' v'$, then there is an up-down path $(v_0,\ldots,v_k)\subseteq W$ such that $v_0=w $, $f(v_k)=v'$, and for all $i\in(0,k)$, $f(v_i) = u'$.

\end{itemize}
\end{definition}

It is easy to see that an up-down morphism is also a p-morphism. Indeed, if $w \in W$ and $w \peq v$, then we have an up-down path $w \peq v \seq v$, and this implies $f(w) \peq' f(v) \seq' f(v)$, so in particular $f(w) \peq' f(v)$.
On the other hand, if $f(w) \peq' v'$, then we consider the up-down path $f(w) \peq' v' \seq' v'$ and we get that there is an up-down path $(v_0,\ldots,v_n)\subseteq W$ such that $v_0=w$ and $f(v_1) = v'$, which yields the back condition for $\peq$  since $v_0\peq v_1$.

\begin{lemma}\label{lemmPmorphPreserve}
Let $\mathcal M =(W,\peq,\val\cdot)$, $\mathcal M'=(W',\peq',\val\cdot')$ be finite poset models and suppose that $f: W \rightarrow W'$ is an up-down morphism.
Then, for every formula $\varphi\in{\mathcal L_\gamma}$ we have $\val\varphi=f^{-1}[\val\varphi']$.\end{lemma}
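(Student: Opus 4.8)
The plan is to prove $\val\varphi = f^{-1}[\val\varphi']$ by induction on the structure of $\varphi$, showing for every formula that $w \in \val\varphi$ iff $f(w) \in \val\varphi'$. The atomic case is exactly the (atom) condition of an up-down morphism (together with its contrapositive, which follows since an up-down morphism is a p-morphism and hence reflects valuations via the standard argument). The Boolean cases $\neg$ and $\wedge$ are routine and follow immediately from the induction hypothesis and the satisfaction clauses. For the $\Box$ case, I would invoke the fact, already noted in the excerpt, that an up-down morphism is a p-morphism; since p-morphisms preserve and reflect the truth of $\Box$-formulas in $\sf S4$/Alexandroff models, the induction hypothesis for $\varphi$ lifts to $\Box\varphi$ in the usual way.

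The genuinely new case, and the main obstacle, is the reachability modality $\gamma(\varphi,\psi)$. Here I would lean on the up-down path characterization of $R^\varphi$ from Lemma~\ref{lemmUDR}, so that $w \models \gamma(\varphi,\psi)$ iff there is an up-down path from $w$ to some $v \in \val\psi$ whose strictly-intermediate vertices all lie in $\val\varphi$. The (forth) condition is tailored precisely to push such a path forward: given an up-down path $w \peq w_1 \succ w_2 \prec \cdots \seq w_k$ witnessing reachability in $\mathcal M$, applying (forth) segment by segment produces the image sequence $f(w) \peq' f(w_1) \seq' f(w_2) \cdots$, and by the induction hypothesis the intermediate $f(w_i)$ satisfy $\varphi'$ while $f(w_k)$ satisfies $\psi'$. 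One technical point to handle is that $f$ need not be injective, so consecutive images may coincide; but repeated vertices can be collapsed or absorbed, and in any case the relation $R^{\varphi'}$ is defined to be transitive, so the witnessing path in $\mathcal M'$ still establishes $f(w) \mathrel{R^{\varphi'}} f(w_k)$. This gives the forward inclusion $\val\varphi \subseteq f^{-1}[\val\varphi']$.

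The reverse inclusion is where the (back) condition does its work, and I expect this to be the subtler half. Suppose $f(w) \models \gamma(\varphi',\psi')$, witnessed by an up-down path $f(w) \peq' u'_1 \seq' u'_2 \prec' \cdots \seq' u'_m$ in $\mathcal M'$ with the intermediate $u'_i \in \val{\varphi'}$ and the endpoint in $\val{\psi'}$. The idea is to reconstruct a witnessing path in $\mathcal M$ by repeatedly applying (back) to each elementary ``V-segment'' $f(\text{current}) \peq' u' \seq' v'$ of the target path. Each application yields a local up-down path in $W$ whose interior vertices map to $u'$ (hence satisfy $\varphi'$, so by the induction hypothesis satisfy $\varphi$) and whose final vertex maps to $v'$; concatenating these local paths along $w$ produces a single up-down path in $W$ from $w$ to some preimage of the target endpoint, with all strict interior vertices in $\val\varphi$ and the endpoint in $\val\psi$. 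By Lemma~\ref{lemmUDR} this witnesses $w \models \gamma(\varphi,\psi)$.

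The care required throughout is bookkeeping about the interior-versus-endpoint distinction in the definition of up-down paths and of $R^\varphi$: only the strictly intermediate points must satisfy $\varphi$, and one must ensure the concatenation of local paths respects the $\peq/\seq$ alternation pattern and does not accidentally demand $\varphi$ at a junction point. Since the (back) clause is stated exactly so that its output up-down paths glue together at endpoints satisfying the matching condition, the concatenation goes through, and the induction is complete.
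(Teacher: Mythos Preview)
Your approach matches the paper's: structural induction, with the $\Box$ case handled via the p-morphism property and the $\gamma$ case via pushing up-down paths forward using (forth) and pulling them back segment-by-segment using (back), then concatenating via the transitivity built into $R^\varphi$. Your observation that the image of an up-down path need not literally be an up-down path (strictness may fail), and that one should instead reason directly with $R^{\varphi'}$, is a useful refinement of the paper's somewhat telegraphic sketch.

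One slip worth flagging: in the atomic case you write ``together with its contrapositive, which follows since an up-down morphism is a p-morphism and hence reflects valuations.'' You mean the \emph{converse}, not the contrapositive, and it does not follow from the p-morphism property (the forth/back conditions concern only the accessibility relation, not the valuation). In fact the paper's (atom) clause as stated gives only the implication $w\in\val p \Rightarrow f(w)\in\val p'$; the lemma needs the biconditional $\val p = f^{-1}[\val p']$, which the paper tacitly assumes and which does hold in the sole application (the map $\Max$, where the nerve valuation is defined exactly so that this holds). So this is really an omission in the paper's definition rather than in your argument, but your attempted patch for it is not correct.
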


\begin{proof}
The proof proceeds by a standard structural induction on $\varphi$.
The case for $\Box \varphi $ follows from up-down morphisms being also p-morphisms.
The case for $\reach\varphi\psi$ requires observing that if $(w_0,\ldots,w_{2n})$ is an up-down path in $W$ witnessing that $\reach\varphi\psi$ holds on $w = w_0$, then the `forth' clause can be applied $n $ times to obtain a path $(w'_0,\ldots,w'_{2n'})$ witnessing $\reach\varphi\psi$ on $f(w_0)$.
If instead $\reach\varphi\psi$ holds on $f(w)$, a symmetric argument shows that  $\reach\varphi\psi$ holds on $ w $.
\end{proof}

To see that the nerve of a Kripke model satisfies the formulas satisfied in the original model, it suffices to see that $\Max$ is an up-down morphism.

\begin{lemma}\label{lemmNerveMorph}
If $\mathcal M=(W,\peq,\val\cdot)$ is a finite poset model then the map $\Max\colon N(W)\to W$ is an up-down morphism.
\end{lemma}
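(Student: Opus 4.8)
The plan is to verify directly that $\Max \colon N(W) \to W$ satisfies the three clauses in the definition of up-down morphism: (atom), (forth), and (back). The (atom) clause is immediate from the definition of the pulled-back valuation $\val p^N = \{c \mid \Max(c) \in \val p\}$: if a chain $c$ lies in $\val p^N$ then by definition $\Max(c) \in \val p$, which is precisely what (atom) demands. The real content lies in verifying (forth) and (back), which I treat separately below.

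**For the (forth) clause,** I would start from chains $c \peq d \seq e$ in $N(W)$, meaning $c \subseteq d \supseteq e$ as chains in $W$. I want to show $\Max(c) \peq \Max(d) \seq \Max(e)$. Since $c \subseteq d$, the maximal element of $c$ is an element of the chain $d$, so it is $\peq \Max(d)$; symmetrically, since $e \subseteq d$, we get $\Max(e) \peq \Max(d)$, i.e.\ $\Max(d) \seq \Max(e)$. This gives exactly the required $\Max(c) \peq \Max(d) \seq \Max(e)$, so (forth) is essentially a one-line observation about maxima of nested chains.

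**The (back) clause is where I expect the main obstacle to lie.** Here I start from a chain $c \in N(W)$ together with elements $u', v' \in W$ satisfying $\Max(c) \peq u' \seq v'$, and I must produce an up-down path $(v_0, \ldots, v_k)$ in $N(W)$ with $v_0 = c$, $\Max(v_k) = v'$, and $\Max(v_i) = u'$ for every intermediate $i \in (0,k)$. The natural idea is to move up from $c$ to a chain whose maximum is $u'$, then come back down, then go up to a chain whose maximum is $v'$. Concretely, since $\Max(c) \peq u'$, the set $c \cup \{u'\}$ is again a chain with maximum $u'$ (as $u'$ dominates $\Max(c)$ and hence all of $c$), giving $c \peq c \cup \{u'\}$ in $N(W)$ and $\Max(c \cup \{u'\}) = u'$. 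Similarly $\{u', v'\}$ is a chain with maximum $v'$ since $u' \peq v'$, and $\{u'\} \peq \{u', v'\}$. I would then assemble an up-down path such as
\[
c \;\peq\; c \cup \{u'\} \;\seq\; \{u'\} \;\peq\; \{u', v'\},
\]
padding with a final repetition if the parity convention $k = 2j$ requires it. The delicate point is bookkeeping: I must check that each intermediate chain in the path has maximum exactly $u'$ (so that the condition $\Max(v_i) = u'$ for $i \in (0,k)$ is met), and that the zigzag pattern $\peq, \seq, \peq, \ldots$ together with the length-parity constraint from the definition of up-down path is respected. Getting the endpoints and the parity to line up — possibly by inserting a trivial step like $\{u',v'\} \seq \{u',v'\}$ to fix the final $\seq$ at the top — is the only genuinely fiddly part, but it is routine once the three witnessing chains $c \cup \{u'\}$, $\{u'\}$, and $\{u',v'\}$ are identified.
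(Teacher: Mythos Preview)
Your overall strategy matches the paper's exactly: verify (atom), (forth), and (back) directly by exhibiting an explicit up-down path of chains for (back). Your (atom) and (forth) arguments are correct, and indeed your (back) argument is more careful than the paper's in that you start from an arbitrary chain $c$ rather than tacitly from the singleton $\{\Max(c)\}$.

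There is, however, a real slip in (back). You write that $\{u',v'\}$ has maximum $v'$ ``since $u' \peq v'$'', but the hypothesis is $u' \seq v'$, i.e.\ $v' \peq u'$; hence $\Max(\{u',v'\}) = u'$, not $v'$. Your displayed path $c \peq c\cup\{u'\} \seq \{u'\} \peq \{u',v'\}$ therefore terminates at a chain whose maximum is $u'$, and your proposed padding by a trivial repetition $\{u',v'\}\seq\{u',v'\}$ does not repair the endpoint condition $\Max(v_k)=v'$. The fix is to take one further downward step to the singleton $\{v'\}$, giving the length-$4$ up-down path
\[
c \;\subseteq\; c\cup\{u'\} \;\supseteq\; \{u'\} \;\subseteq\; \{u',v'\} \;\supseteq\; \{v'\},
\]
which satisfies $\Max(v_4)=v'$ and $\Max(v_i)=u'$ for $i=1,2,3$. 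This is precisely the paper's construction, with your $c$ in place of the paper's singleton $\{w\}$, so once you correct the orientation of $\seq$ your argument goes through.
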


\begin{proof}
The atomic clause is taken care of by the definition of $\val\cdot^N$.
For the (forth) condition, if $c, c_1, c_2 \in N(W)$ with $c \subseteq c_1 \supseteq c_2$, it readily follows that $\Max(c) \peq \Max(c_1) \seq \Max(c_2)$.

Let us now take $w, u, v \in W$ such that $w \peq u \seq v$. We note that $N(W)$ contains the chains $\{w\} \subseteq \{w,u\} \supseteq \{u\} \subseteq \{v,u\} \supseteq \{v\}$, and that this sequence forms an up-down path. Moreover, we have $\Max(\{w\}) = w$, $\Max(\{v\}) = v$ and $\Max(\{w,u\}) = \Max(\{u\}) = \Max(\{v,u\}) = u$. Hence the (back) condition is also satisfied.
\end{proof}

It follows that the satisfiability in finite poset models is the same as the satisfiability in polyhedral models.

\begin{theorem}\label{theoKripkePoly}
A formula $\varphi$ is satisfiable on the class of finite poset models if and only if it is satisfiable on the class of polyhedral models.
\end{theorem}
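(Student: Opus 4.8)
The statement is a biconditional, so the plan is to establish each direction by chaining together the transfer results already assembled, tracking a satisfying point through every model transformation. The forward direction (polyhedral to finite poset) is essentially immediate, while the reverse direction (finite poset to polyhedral) carries the substance and routes through the nerve.

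For the direction from polyhedra to finite posets, suppose $\varphi$ is satisfied at some point $x$ of a polyhedral model $\mathcal X = (K,\val\cdot)$. Then Lemma~\ref{lemmPoltoK} gives $\mathcal M(\mathcal X), \face(x) \models \varphi$. Since the Kripke companion $\mathcal M(\mathcal X)$ has as its domain the finite set $K$ ordered by the face relation $\peq$, which is a partial order, it is itself a finite poset model, and $\face(x)$ already witnesses satisfiability of $\varphi$ on a finite poset. This is precisely the consequence of Lemma~\ref{lemmPoltoK} noted after its statement, so no new work is needed.

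For the reverse direction I proceed in three transfers. Suppose $\mathcal M, w \models \varphi$ for a finite poset model $\mathcal M = (W,\peq,\val\cdot)$. First I pass to the nerve: by Lemma~\ref{lemmNerveMorph} the map $\Max \colon N(W) \to W$ is an up-down morphism, so Lemma~\ref{lemmPmorphPreserve} yields $\val\varphi^N = \Max^{-1}[\val\varphi]$ in $N(\mathcal M)$. Since $w = \Max(\{w\})$ and $\{w\} \in N(W)$, the singleton chain $\{w\}$ lies in $\val\varphi^N$, whence $N(\mathcal M), \{w\} \models \varphi$. Second, Theorem~\ref{theoNerve} supplies a polyhedral model $\mathcal X$ with $N(\mathcal M) = \mathcal M(\mathcal X)$; under this identification $\{w\}$ becomes a simplex $c$ of $K$ satisfying $\varphi$ in $\mathcal M(\mathcal X)$. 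Third, since the relative interiors partition $|K|$ every cell is nonempty, so $\face$ is surjective and some $x \in |K|$ has $\face(x) = c$; Lemma~\ref{lemmPoltoK} then gives $\mathcal X, x \models \varphi$, exhibiting $\varphi$ as satisfiable on a polyhedron.

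The genuinely hard constructions — realizing a nerve as a face poset (Theorem~\ref{theoNerve}) and truth preservation under up-down morphisms (Lemma~\ref{lemmPmorphPreserve}) — are already in hand, so I do not expect a deep obstacle in this theorem itself. The one point demanding care is the bookkeeping of the satisfying point across the three transfers, which relies on the surjectivity of both $\Max$ (via singleton chains) and $\face$ (via nonempty relative interiors). Invoking these surjectivity facts to guarantee that a witness survives at each stage, the two directions combine to give the equivalence.
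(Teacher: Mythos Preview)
Your proposal is correct and follows essentially the same route as the paper: Lemma~\ref{lemmPoltoK} for the polyhedral-to-poset direction, and for the converse the chain Lemma~\ref{lemmNerveMorph} $+$ Lemma~\ref{lemmPmorphPreserve} to pass to the nerve, then Theorem~\ref{theoNerve} and Lemma~\ref{lemmPoltoK} to realize it polyhedrally. Your explicit bookkeeping of the witness via the surjectivity of $\Max$ (singleton chains) and of $\face$ (nonempty relative interiors) is a helpful elaboration of what the paper leaves implicit.
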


\begin{proof}
That any formula satisfied on a polyhedral model is satisfied on a finite poset model follows from Lemma~\ref{lemmPoltoK}.

For the converse, if $\varphi$ is satisfied on a finite poset model $\mathcal M$, then $\varphi$ is also satisfied on $N(\mathcal M)$ by lemmas~\ref{lemmPmorphPreserve} and \ref{lemmNerveMorph}.
Theorem~\ref{theoNerve} and Lemma~\ref{lemmPoltoK} then imply that $\varphi$ is satisfied on a polyhedral model.
\end{proof}

We conclude that the logic of polyhedral models coincides with the logic of finite poset models.
With this in mind, we may restrict our attention to the latter for the remainder of the text.

\section{Logics of reachability}\label{s:logics}

In this section, we introduce axiomatically two key logics of this paper -- $\sf ALR$ and  $\sf PLR$.
The former will be based on $\sf S4$ and the latter on $\sf Grz$, both familiar modal logics related to transitive frames and topological spaces.  
However, it is   necessary to introduce new axioms and rules for the reachability operator. In particular, we define two rules: the first one is quite intuitive, and it basically states that implication preserves reachability, or that reachability is monotone in both coordinates.

The second rule, or \emph{induction rule}, is a bit more subtle.
Intuitively, it works as a back propagation rule for the $\psi$ formula, where the first premise is needed to propagate through a down step of an up-down path, while the second one is needed to propagate through an up step.

\begin{definition}
Axioms of the Alexandroff reachability logic $\sf ALR$ are given by all the propositional tautologies and Modus Ponens, $\sf S4$ axioms and rules for $\nc$, plus the following:

\begin{description}

	\item[Axiom 1.] $\psi \vee (\varphi\wedge \reach\varphi\psi) \to\nc(\varphi\to\reach\varphi\psi)$
	
	\item[Axiom 2.] $ \ps (\varphi\wedge  \reach\varphi\psi )\to \reach\varphi\psi $
	\medskip
	\item[Rule 1.]
	$\displaystyle\dfrac{\varphi\to\varphi' \ \ \psi\to\psi'}{\reach\varphi\psi \to \reach{\varphi'}		{\psi'}}$
		\medskip

	\item[Rule 2.] 
	$\displaystyle\dfrac{\psi\to \nc  (\varphi\to \psi)  \ \ \ \varphi\wedge \ps(\varphi\wedge\psi) \to   	\psi   }{ \reach \varphi\psi  \to \ps(\varphi\wedge\psi)}.$

\end{description}
	\medskip

The polyhedral reachability logic $\sf PLR$ is obtained by adding the $\sf Grz$ axiom $\Box(\Box(p\to\Box p)\to p)\to \Box p$ to $\sf ALR$.
\end{definition}

We note in passing that the above two rules can also be formulated as axioms using a defined operator $[\pi]\varphi\equiv \neg\gamma(\top,\neg\varphi)$ which semantically acts as a `global' box modality inside the path-connected component of the point of evaluation. Indeed, it is not hard to see that $[\pi]\varphi$ will be true at a point $w$ in an Alexandroff model iff $\varphi$ is true at each $v$ which connects with $w$ by a zigzag path, that is $\varphi$ is universally true inside the connected component containing $w$. With this in mind, Rule 1 can be expressed as:
 $$[\pi](\varphi\to\varphi')\land [\pi](\psi\to\psi')\land \reach\varphi\psi \to \reach{\varphi'}{\psi'}$$

The next proposition shows soundness of our logics.

\begin{proposition}\label{prop:sound}
The logic $\sf ALR$ is sound for the class of Alexandroff spaces and the logic $\sf PLR$ is sound for the class of finite poset models. 
\end{proposition}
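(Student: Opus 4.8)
The plan is to verify soundness by the standard modal method: we must show every axiom is valid on the relevant class of frames, and every rule preserves validity. Since $\sf PLR$ extends $\sf ALR$ by the $\sf Grz$ axiom, and $\sf Grz$ is known to be sound and complete for finite posets, the only genuinely new work concerns the reachability axioms and rules. Because $\sf ALR$ is built over $\sf S4$ (sound for all Alexandroff spaces) and $\sf PLR$ over $\sf Grz$ (sound for finite posets), and all reachability principles are formulated uniformly, I would prove validity of Axioms 1 and 2 and preservation for Rules 1 and 2 once, on an arbitrary (finite) Alexandroff model, and let Lemma~\ref{lemmUDR} translate freely between the relational reading of $\reach\varphi\psi$ (via $R^\varphi$ or up-down paths) and the topological one.

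First I would handle Rule~1, which is the easy case: it simply asserts monotonicity of $\reach\cdot\cdot$ in both arguments. Given an up-down path witnessing $\reach\varphi\psi$ at a point $w$, validity of the premises $\varphi\to\varphi'$ and $\psi\to\psi'$ means the same path's interior points lie in $\val{\varphi'}$ and its endpoint in $\val{\psi'}$, so the very same path witnesses $\reach{\varphi'}{\psi'}$ at $w$. Axiom~1 and Axiom~2 I would treat by unwinding the definition of $R^\varphi$ directly. For Axiom~2, $\ps(\varphi\wedge\reach\varphi\psi)\to\reach\varphi\psi$, the point is that if some $u\seq w$ (an $\peq$-larger point reachable by the diamond) satisfies $\varphi\wedge\reach\varphi\psi$, then prepending the trivial up-step $w\peq u$ (or rather using clause~(1) of the $R^\varphi$ definition together with clause~(2)) produces a witness for $\reach\varphi\psi$ at $w$; here I must be careful about the orientation of $\seq$ versus $\peq$ and the placement of the $\varphi$-point. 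Axiom~1, $\psi\vee(\varphi\wedge\reach\varphi\psi)\to\nc(\varphi\to\reach\varphi\psi)$, requires showing the consequent holds at every $\peq$-successor: at any $v\seq w$ with $v\in\val\varphi$ one must exhibit $v\mathrel R^\varphi$-reaching a $\psi$-point, which follows because $w$ can descend to $v$ (or $v$ is itself the relevant endpoint), again a short manipulation of the up-down path definition.

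The main obstacle is Rule~2, the induction rule, whose soundness is the only step requiring a genuine inductive argument. Its conclusion $\reach\varphi\psi\to\ps(\varphi\wedge\psi)$ must be derived from the two premises interpreted as back-propagation invariants: $\psi\to\nc(\varphi\to\psi)$ handles propagation across a down-step ($\succ$) of an up-down path, while $\varphi\wedge\ps(\varphi\wedge\psi)\to\psi$ handles an up-step ($\prec$). I would fix a model validating both premises, assume $\reach\varphi\psi$ at $w$, take a witnessing up-down path $(w_0,\dots,w_{2n})$ by Lemma~\ref{lemmUDR}, and prove by induction on the path length (working backwards from the $\psi$-endpoint $w_{2n}$) that each $w_{2i}$ with $0<2i$ satisfies $\ps(\varphi\wedge\psi)$, using the first premise at down-steps and the second at up-steps; the base case is that $w_{2n}\in\val\psi$ and $w_{2n-1}\peq w_{2n}$ gives $w_{2n-1}\models\ps\psi$, refined to $\ps(\varphi\wedge\psi)$ since the interior points satisfy $\varphi$. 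The delicate part is matching the exact shape of the premises to the up-step/down-step structure and confirming that the finiteness of the poset (for $\sf PLR$) or well-foundedness is not actually needed here, since the induction is on path length rather than on the frame; I would remark that Rule~2 is valid on all Alexandroff models, so its soundness holds at the $\sf ALR$ level and is inherited by $\sf PLR$.
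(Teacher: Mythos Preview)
Your overall strategy matches the paper's: check each axiom on an arbitrary Alexandroff model and show each rule preserves validity, with the $\sf Grz$ axiom handled by the known fact that it holds on finite posets. Your handling of Rule~1, Axiom~1 and Axiom~2 is essentially the paper's argument.

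Your sketch of Rule~2, however, contains two concrete errors. First, the orientation in your base case is reversed: in an up-down path $(w_0,\ldots,w_{2n})$ the final step is $w_{2n-1}\seq w_{2n}$, i.e.\ $w_{2n}\peq w_{2n-1}$, so it is $w_{2n}$ that sees $w_{2n-1}$ via $\ps$, not the other way around; your claim that ``$w_{2n-1}\peq w_{2n}$ gives $w_{2n-1}\models\ps\psi$'' is false as written. The correct opening move (and the paper's) is to apply the first premise $\psi\to\nc(\varphi\to\psi)$ at $w_{2n}$, which pushes $\psi$ \emph{up} to $w_{2n-1}$. Second, your proposed invariant---that each $w_{2i}$ with $0<2i$ satisfies $\ps(\varphi\wedge\psi)$---never reaches $w_0$, which is precisely the point where the conclusion is needed. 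The clean invariant (again the paper's) is to show $w_i\models\psi$ for every intermediate $0<i<2n$ by backward induction: going back from an even-indexed $w_{j}$ (where $\psi$ is already known) to the odd $w_{j-1}$ one has $w_j\peq w_{j-1}$ and uses premise~1; going back from an odd $w_{j}$ (where now $\varphi\wedge\psi$ holds) to the even $w_{j-1}$ one has $w_{j-1}\peq w_{j}$ and uses premise~2. Once $w_1\models\psi$, the relation $w_0\peq w_1$ together with $w_1\models\varphi$ yields $w_0\models\ps(\varphi\wedge\psi)$, as required.
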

\begin{proof}
	Let $\mathcal M=(W,\peq,\val\cdot)$ be an Alexandroff model.

	Axiom 1: $\psi \vee (\varphi\wedge \reach\varphi\psi) \to\nc(\varphi\to\reach\varphi\psi)$
	\smallskip
	
	Take $w \not \models \nc (\varphi \to \reach \varphi \psi)$: then $\exists v \seq w .
	\ v \not \models (\varphi \to \reach \varphi \psi)$, namely $v \models \varphi$, $v \not 
	\models \reach \varphi \psi$. Suppose $w \models \psi \vee (\varphi\wedge \reach\varphi\psi)$. We then have two cases:
	
	\begin{itemize}
		\item $w \models \psi$: we have $v \peq v \seq w$. Hence $v \mathrel R^{\varphi} w$ and 
		$v \models \reach\varphi\psi$, a contradiction.
		\item $w \models (\varphi \land \reach\varphi\psi)$: again we have $v \peq v \seq w$, hence 
		$v \mathrel R^{\varphi} w$. From $w \models \reach\varphi\psi$ we also get $w \mathrel R^{\varphi} u$ for some $u \models \psi$. But then, since $w\models\varphi$, $v \mathrel R^{\varphi} w$ and $w \mathrel R^{\varphi} u$ we obtain $v \mathrel R^{\varphi} u$. Together with $u \models \psi$ this implies $v \models 
		\reach\varphi\psi$, again a contradiction. 
	\end{itemize}

	Axiom 2: $\ps (\varphi \land \reach \varphi \psi) \to \reach \varphi \psi$
	\smallskip
	
	Take $w \models \ps (\varphi \land \reach \varphi \psi)$, $w \not \models \reach \varphi \psi$. Then there is a point $v$ such that $w \peq v$ and $v \models \varphi$, $v \models \reach \varphi \psi$. Hence $w \mathrel R^\varphi v$ and $v \mathrel R^\varphi u$ for some $u \models \psi$. This implies $w \mathrel R^\varphi u$ and thus, $w \models \reach \varphi \psi$ -- a contradiction.
	\medskip
	
	Rule 1:
	\[ \dfrac{\varphi\to\varphi' \ \ \psi\to\psi'}{\reach\varphi\psi \to \reach{\varphi'}		   {\psi'}} \]
	
	Take a model $(W, R, \val . )$ and suppose that $\forall w \in W$, $w \models \varphi 
	\implies w \models \varphi'$, $w \models \psi \implies w \models \psi'$. Take points 
	$u,v$ such that $u \mathrel R^\varphi v$ and $v \models \psi$. Then it easily follows that $u \mathrel 
	R^{\varphi'} v$ and $v \models \psi'$. Hence
	$\reach \varphi \psi \to \reach {\varphi'} {\psi'}$ is valid in the model.
	
	Rule 2 (inductive rule):
	\[ \dfrac{\psi\to \nc  (\varphi\to \psi)  \ \ \ \varphi\wedge \ps(\varphi\wedge\psi) \to   	   \psi   }{ \reach \varphi\psi  \to \ps(\varphi\wedge\psi)}. \]
	
	Take a model $\mathcal{M}$ such that for all $w \in W$ we have $w \models \psi \to  
	\nc (\varphi \to \psi)$ and $w \models \varphi \land \ps(\varphi \land \psi)
	\to \psi$. Suppose $u\models \reach \varphi \psi$. Then there is an up-down path $w_0\peq w_1\succ w_2\prec w_3\succ\ldots \prec w_{2k-1}\seq w_{2k}$ with $w_0=u$,   $w_{2k}\models\psi$ and $w_i\models\varphi$ for all $i$ with $0<i<2k$. We aim to show that $w_1\models\psi$ thereby demonstrating $u\models\ps(\varphi\wedge\psi)$. We will in fact show by a backwards induction that $w_i\models\psi$ for all $i$ with $0<i<2k$. Indeed, given the aforementioned up-down path, we start by considering the last point (where $\psi$ holds) and we build the proof by proceeding backward on the path. To this end, consider first $w_{2k-1}$. From $w_{2k}\models \psi\to \nc  (\varphi\to \psi)$ and $w_{2k}\models\psi$ we get $w_{2k}\models \nc  (\varphi\to \psi)$. Since $w_{2k}\peq w_{2k-1}$ and $w_{2k-1}\models\varphi$, we have $w_{2k-1}\models\psi$. If $k=1$ we are done, otherwise $w_{2k-2}\models\varphi$ and we continue. From $w_{2k-2}\peq w_{2k-1}\models\varphi\wedge\psi$ we obtain $w_{2k-2}\models \ps(\varphi\wedge\psi)$.  Utilizing $w_{2k-2} \models \varphi \land \ps(\varphi \land \psi)\to \psi$ we conclude $w_{2k-2}\models\psi$. Continuing in this fashion, we will end up with $w_1\models\psi$ as desired.
	
	If moreover $\mathcal M$ is a finite poset model, thus based on a finite poset, then it is well known that the $\sf Grz$ axiom is valid on $\mathcal M$.
\end{proof}

\begin{lemma}\label{lem:diamU}
$\reach \varphi \psi \to \ps \varphi$ is a theorem of $\sf ALR$.
\end{lemma}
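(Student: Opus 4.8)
The plan is to deduce the claim from the induction rule (Rule 2), after first weakening the second argument of $\gamma$ to $\top$ by monotonicity (Rule 1). The key observation is that, with $\top$ in place of $\psi$, both premises of Rule 2 become derivable for trivial reasons, so the rule fires and delivers $\ps\varphi$ almost directly. Semantically this is the expected fact that an $R^\varphi$-path out of $w$ must, by the base clause of $R^\varphi$, begin with a step $w\peq u$ to a $\varphi$-point, whence $\ps\varphi$; but here the point is to package that observation syntactically.

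Concretely, I would proceed in a few short steps. First, since $\varphi\to\varphi$ and $\psi\to\top$ are propositional tautologies, Rule 1 yields $\vdash\reach\varphi\psi\to\reach\varphi\top$. Second, I apply Rule 2 with $\top$ playing the role of $\psi$: its first premise $\top\to\nc(\varphi\to\top)$ holds because $\varphi\to\top$ is a tautology, so $\nc(\varphi\to\top)$ follows by necessitation; and its second premise $\varphi\wedge\ps(\varphi\wedge\top)\to\top$ is itself a tautology. Rule 2 therefore gives $\vdash\reach\varphi\top\to\ps(\varphi\wedge\top)$. Third, since $\varphi\wedge\top$ is propositionally equivalent to $\varphi$ and $\ps$ is a normal (hence monotone) $\sf S4$ diamond, we have $\vdash\ps(\varphi\wedge\top)\to\ps\varphi$. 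Chaining these three implications gives $\vdash\reach\varphi\psi\to\ps\varphi$, as required.

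I do not expect a genuine obstacle here; the only thing to get right is the choice of weakening. One is tempted to apply Rule 2 to $\reach\varphi\psi$ directly, but then its premises involve the original $\psi$ and are not available in general, so the preliminary pass through Rule 1 replacing $\psi$ by $\top$ is exactly what makes the induction rule applicable. If one prefers to avoid the abbreviation $\top$, the same argument goes through verbatim with $\varphi\vee\neg\varphi$ in its place.
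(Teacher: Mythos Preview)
Your proof is correct and follows essentially the same route as the paper: weaken $\psi$ to $\top$ via Rule~1, then apply Rule~2 with the trivially valid premises $\top\to\nc(\varphi\to\top)$ and $\varphi\wedge\ps(\varphi\wedge\top)\to\top$ to obtain $\reach\varphi\top\to\ps(\varphi\wedge\top)$, and finish by simplifying $\ps(\varphi\wedge\top)$ to $\ps\varphi$.
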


\begin{proof}
Since $\psi \to \top$ is a tautology, applying Rule 1 gives $\reach \varphi \psi \to \reach \varphi \top$. Taking easily derivable $\top \to \nc (\varphi \to \top)$ and $\varphi \land \ps (\varphi \land \top) \to \top$ as premises of Rule 2, we conclude $\reach \varphi \top \to \ps (\varphi \land \top)$. Modus Ponens now yields  $\ps \varphi$.
\end{proof}

\section{Alexandroff completeness}\label{sec:s4}

In order to prove completeness of the logic, we will use a widely known technique in the realm of modal logics. We will go through the construction of a \emph{canonical model}, which will be filtered modulo a specifically crafted set of formulas. Finiteness of such a set will give us a finite model, which is essential in showing that the axiom system also enjoys the \emph{finite model property}. We will then prove a filtration lemma, to achieve completeness.

\begin{definition}
Let $\Lambda\in\{{\sf ALR},{\sf PLR}\}$.
The {\em canonical model for $\Lambda$}  is the structure $\mathcal M_\mathrm c = (W_\mathrm c,\peq _\mathrm c, \val\cdot_\mathrm c)$, where $W_\mathrm c$ is the set of all {\em $\Lambda$-theories} (maximal $\Lambda$-consistent sets), $T\peq_c S$ if whenever $\nc \varphi\in T$, it follows that $\nc\varphi\in S$, and $\val p_\mathrm c = \{T\in W_\mathrm c: p\in T\}$.
\end{definition}

The results in this section apply to both the canonical model for $\sf ALR$ and for $\sf PLR$, beginning with the following standard lemma.

\begin{lemma}\label{l:diam-witnessing}
For any formula $\varphi$,  $\ps\varphi\in T\in W_\mathrm c$ iff there is $S\seq_c T$ with $\varphi\in S$.
\end{lemma}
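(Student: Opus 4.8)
The plan is to prove both directions of the biconditional by the standard canonical-model argument for $\mathsf{S4}$, noting that the reachability connective $\gamma$ plays no role: the statement concerns only $\nc$ and $\ps$, and since both $\mathsf{ALR}$ and $\mathsf{PLR}$ contain all of $\mathsf{S4}$, the usual existence lemma applies with $\ps$ and $\nc$ treated as ordinary $\mathsf{S4}$ modalities.

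For the easy direction, suppose there is $S$ with $S\seq_c T$ (equivalently $T\peq_c S$) and $\varphi\in S$, and argue by contradiction that $\ps\varphi\in T$. If $\ps\varphi\notin T$, then $\neg\ps\varphi\in T$ by maximality, and since $\neg\ps\varphi\leftrightarrow\nc\neg\varphi$ is a theorem, we get $\nc\neg\varphi\in T$. The definition of $\peq_c$ then forces $\nc\neg\varphi\in S$, and the $\mathsf T$ axiom $\nc\neg\varphi\to\neg\varphi$ gives $\neg\varphi\in S$, contradicting $\varphi\in S$ and the consistency of $S$.

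For the converse, suppose $\ps\varphi\in T$. Following the usual recipe, I would set $\Gamma = \{\chi : \nc\chi\in T\}\cup\{\varphi\}$ and first check that $\Gamma$ is consistent. If it were not, there would be finitely many $\chi_1,\dots,\chi_n$ with each $\nc\chi_i\in T$ and $\vdash(\chi_1\wedge\cdots\wedge\chi_n)\to\neg\varphi$; applying necessitation and the distribution of $\nc$ over conjunction yields $\vdash(\nc\chi_1\wedge\cdots\wedge\nc\chi_n)\to\nc\neg\varphi$, whence $\nc\neg\varphi\in T$ and so $\neg\ps\varphi\in T$, contradicting $\ps\varphi\in T$. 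Having established consistency, I extend $\Gamma$ to a maximal consistent set $S$ by Lindenbaum's lemma, so that $\varphi\in S$ by construction.

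It remains to verify $T\peq_c S$, and this is the single step where the specific ``box-persistence'' formulation of the canonical relation must be reconciled with the set $\Gamma$ I chose: $\peq_c$ is defined by persistence of boxed formulas, not by $\{\chi:\nc\chi\in T\}\subseteq S$. Given $\nc\psi\in T$, the $\mathsf 4$ axiom $\nc\psi\to\nc\nc\psi$ gives $\nc\nc\psi\in T$, so $\nc\psi\in\{\chi:\nc\chi\in T\}\subseteq S$, which is exactly what $T\peq_c S$ demands. The only subtlety I anticipate is precisely this reconciliation — the $\mathsf 4$ axiom is what lets one pass between the two formulations of the relation, just as the $\mathsf T$ axiom powers the easy direction — but beyond this the argument is the textbook $\mathsf{S4}$ existence lemma.
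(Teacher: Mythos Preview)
Your proof is correct and is precisely the standard $\mathsf{S4}$ existence-lemma argument; the paper itself offers no proof, merely labeling the result a ``standard lemma,'' so your write-up is exactly what the authors leave implicit. Your explicit attention to the fact that $\peq_c$ is defined via persistence of boxed formulas (rather than via $\{\chi:\nc\chi\in T\}\subseteq S$), and hence that the $\mathsf 4$ axiom is needed to verify $T\peq_c S$, is appropriate and correct.
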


\subsection{Filtration}

Say that a set $\Sigma$ of formulas is {\em adequate} if it is closed under subformulas and single negations, and whenever $\reach \varphi\psi\in \Sigma$, then $\nc(\varphi\to \reach \varphi\psi)\in\Sigma$ and $\ps(\varphi\wedge  \reach \varphi\psi) \in\Sigma$. For a set $\Gamma$ of formulas let $\Sub (\Gamma)$ denote the set of all subformulas of formulas in $\Gamma$.

\begin{lemma}\label{l:adequate-finite}
Let $\Gamma$ be a finite set of formulas. Then the smallest adequate set containing $\Gamma$ is also finite.
\end{lemma}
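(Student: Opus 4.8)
The plan is to show that a single round of expansion suffices, so that the closure process stabilizes after adding only finitely many formulas. The crucial observation is that the two formulas demanded by the reachability clause of adequacy, namely $\Box(\varphi\to\gamma(\varphi,\psi))$ and $\Diamond(\varphi\wedge\gamma(\varphi,\psi))$, contain no reachability subformula other than $\gamma(\varphi,\psi)$ itself, which is precisely the formula that triggered the clause. Hence the reachability clause never manufactures genuinely new $\gamma$-formulas, and so it needs to be applied only to the finitely many reachability subformulas already present in $\Gamma$.

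Concretely, I would first let $S$ denote the (finite) set of all subformulas of $\Gamma$ of the form $\gamma(\varphi,\psi)$. For each such $\rho=\gamma(\varphi,\psi)\in S$, adequacy forces $\Box(\varphi\to\rho)$ and $\Diamond(\varphi\wedge\rho)$ into the set; collecting these over all of $S$ yields a finite set $A$. I would then take $T$ to be the closure of $\Sub(\Gamma)\cup A$ under subformulas and single negations. Since the subformula closure of a finite set is finite and single-negation closure at most doubles cardinality, $T$ is finite and evidently contains $\Gamma$.

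It remains to verify that $T$ is itself adequate, since then the smallest adequate set containing $\Gamma$ is a subset of $T$ and hence finite. Closure under subformulas and single negations holds by construction. For the reachability clause I would argue that the reachability subformulas occurring in $T$ are exactly the members of $S$: each $\rho\in S$ has $\varphi,\psi\in\Sub(\Gamma)$, so all reachability subformulas of $\varphi$ and $\psi$ already lie in $S$; and each element of $A$ is a $\Box$- or $\Diamond$-formula whose only reachability subformula beyond those of $\varphi$ and $\psi$ is the corresponding $\rho$. Consequently every $\gamma(\varphi,\psi)\in T$ already belongs to $S$, and the two required formulas $\Box(\varphi\to\gamma(\varphi,\psi))$ and $\Diamond(\varphi\wedge\gamma(\varphi,\psi))$ were placed into $A\subseteq T$.

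The only place the argument could go wrong, and hence the point to pin down carefully, is the claim that the expansion does not cascade. I would make this precise by checking that none of the three closure operations introduces a reachability formula outside $S$: taking subformulas and taking single negations cannot, since $\gamma$ never occurs as the main connective of a proper subformula nor of a negation; and the reachability expansion cannot either, by the observation above that $\Box(\varphi\to\rho)$ and $\Diamond(\varphi\wedge\rho)$ have $\rho$ as their sole reachability subformula beyond those of $\varphi$ and $\psi$, all of which are already in $S$. With this invariant in hand, the finiteness of $T$ — and therefore of the smallest adequate set — follows immediately.
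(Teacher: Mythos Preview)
Your proposal is correct and follows essentially the same approach as the paper: form $\Gamma_1$ by adjoining $\Box(\varphi\to\gamma(\varphi,\psi))$ and $\Diamond(\varphi\wedge\gamma(\varphi,\psi))$ for each $\gamma(\varphi,\psi)\in\Sub(\Gamma)$, observe that this introduces no new $\gamma$-subformulas, and then close under subformulas and single negations. The paper states the non-cascading observation more tersely (``$\Gamma$ and $\Gamma_1$ have the same set of subformulas of the form $\gamma(\varphi,\psi)$''), but your more explicit verification of this invariant is the same idea spelled out in greater detail.
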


\begin{proof}
Let $\Gamma_1=\Gamma\cup \{ \nc(\varphi\to \reach \varphi\psi), \ps(\varphi\wedge  \reach \varphi\psi) \mid \reach \varphi \psi \in \Sub (\Gamma) \}$. Clearly $\Gamma_1$ is finite. Moreover, $\Gamma$ and $\Gamma_1$ have the same set of subformulas of the form $\reach \varphi\psi$. 

Let now $\Sigma=\Sub (\Gamma_1)\cup \{\neg\varphi\mid\varphi\in \Sub (\Gamma_1), \varphi\neq\neg\psi\}$.  It is straightforward to check that $\Sigma$ is finite and adequate. 
\end{proof}

\begin{definition}
Let $\Sigma $ be adequate.
Define $T\sim_\Sigma S$ if $T\cap\Sigma=S\cap\Sigma$.
We define $\mathcal M_\Sigma=(W_\Sigma,\peq_\Sigma,\val\cdot_\Sigma)$ as follows:
\begin{enumerate}

\item $W_\Sigma = \{ T_\Sigma :T\in W_\mathrm c\}$, where $ T _\Sigma $ is the equivalence class of $T$ under $\sim_\Sigma$.

\item $T_\Sigma \peq _\Sigma S_\Sigma$ if whenever $\nc\varphi\in T\cap\Sigma$, it follows that $\nc\varphi\in S$.

\item $\val p_\Sigma =\{T_\Sigma:p\in T\cap\Sigma\}$.

\end{enumerate}

It easily follows from the definitions that $S\peq_c T$ implies $S_\Sigma \peq_\Sigma T_\Sigma$. 

Given $\varphi\in\Sigma$, we define a $\varphi$-reachability relation on $\mathcal M_\Sigma$ inductively by letting $T_\Sigma \mathrel R^\varphi_\Sigma S_\Sigma$ if there is $U_\Sigma \in W_\Sigma$ such that $\varphi\in U$ and either

\begin{enumerate}
	\item $T_\Sigma\peq_\Sigma U_\Sigma \seq_\Sigma S_\Sigma$, or
	\item $T_\Sigma \mathrel R^\varphi_\Sigma U_\Sigma$ and $U_\Sigma \mathrel R^\varphi_\Sigma S_\Sigma$.
\end{enumerate}
\end{definition}

It easily follows from this definition that $R^\varphi_\Sigma$ is a symmetric relation. 

\begin{lemma}\label{sound}
If $\reach\varphi\psi \in \Sigma$, $T_\Sigma \mathrel R^\varphi_\Sigma S_\Sigma$, and either $\psi \in S$ or $\varphi\wedge \reach\varphi\psi\in S$, then $\reach\varphi\psi \in  T $.
\end{lemma}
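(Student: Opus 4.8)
The plan is to argue by induction on the inductive definition of $R^\varphi_\Sigma$, exploiting the fact that this relation is the \emph{least} relation closed under the two generating clauses. Concretely, it suffices to establish the statement for pairs produced by the base clause, and for pairs produced by the inductive clause under the assumption that it already holds for the two ``smaller'' pairs. For the base case, suppose $T_\Sigma \mathrel R^\varphi_\Sigma S_\Sigma$ is witnessed by some $U_\Sigma$ with $\varphi\in U$ and $T_\Sigma\peq_\Sigma U_\Sigma\seq_\Sigma S_\Sigma$. Either hypothesis on $S$ entails $\psi\vee(\varphi\wedge\reach\varphi\psi)\in S$, so Axiom 1 (a theorem, hence in every $\Lambda$-theory) together with Modus Ponens gives $\nc(\varphi\to\reach\varphi\psi)\in S$. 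Adequacy of $\Sigma$ guarantees $\nc(\varphi\to\reach\varphi\psi)\in\Sigma$, so this formula lies in $S\cap\Sigma$; since $U_\Sigma\seq_\Sigma S_\Sigma$ is by definition $S_\Sigma\peq_\Sigma U_\Sigma$, the definition of $\peq_\Sigma$ transfers the box to $U$, and the reflexivity axiom $\nc\chi\to\chi$ of $\mathsf{S4}$ together with $\varphi\in U$ yields $\reach\varphi\psi\in U$, whence $\varphi\wedge\reach\varphi\psi\in U$.

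It then remains to descend from $U$ back to $T$. I claim $\ps(\varphi\wedge\reach\varphi\psi)\in T$: if not, then $\neg\ps(\varphi\wedge\reach\varphi\psi)\in T$, equivalently $\nc\neg(\varphi\wedge\reach\varphi\psi)\in T$, and this latter formula belongs to $\Sigma$ as a subformula of $\ps(\varphi\wedge\reach\varphi\psi)\in\Sigma$ (adequacy once more). Then $T_\Sigma\peq_\Sigma U_\Sigma$ forces $\nc\neg(\varphi\wedge\reach\varphi\psi)\in U$, hence $\neg(\varphi\wedge\reach\varphi\psi)\in U$, contradicting the conclusion of the previous paragraph. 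With $\ps(\varphi\wedge\reach\varphi\psi)\in T$ secured, Axiom 2 and Modus Ponens deliver $\reach\varphi\psi\in T$, as required.

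For the inductive case, suppose the pair arises from clause 2 via some $U_\Sigma$ with $\varphi\in U$, $T_\Sigma\mathrel R^\varphi_\Sigma U_\Sigma$ and $U_\Sigma\mathrel R^\varphi_\Sigma S_\Sigma$. Applying the induction hypothesis to $U_\Sigma\mathrel R^\varphi_\Sigma S_\Sigma$ with the \emph{unchanged} hypothesis on $S$ gives $\reach\varphi\psi\in U$; combined with $\varphi\in U$, this is exactly the hypothesis (``$\varphi\wedge\reach\varphi\psi$ holds at the target'') needed to apply the induction hypothesis to $T_\Sigma\mathrel R^\varphi_\Sigma U_\Sigma$, which yields $\reach\varphi\psi\in T$.

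I expect no substantial difficulty here; the only delicate work is bookkeeping in the base case. One must track the directions of $\peq_\Sigma$ correctly — first pushing a box \emph{up} from $S$ to $U$ along $S_\Sigma\peq_\Sigma U_\Sigma$, then reading off a diamond \emph{at} $T$ from membership in the $\peq_\Sigma$-larger class $U$ — and must verify at each step that the auxiliary formulas $\nc(\varphi\to\reach\varphi\psi)$, $\ps(\varphi\wedge\reach\varphi\psi)$ and $\nc\neg(\varphi\wedge\reach\varphi\psi)$ genuinely lie in $\Sigma$. This last point is precisely what the closure conditions in the definition of \emph{adequate} were tailored to provide, so the role of that definition is exactly to make the base case go through.
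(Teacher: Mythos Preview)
Your proof is correct and follows essentially the same route as the paper's: induction on the generating clauses of $R^\varphi_\Sigma$, using Axiom~1 to push $\nc(\varphi\to\reach\varphi\psi)$ from $S$ up to $U$ along $S_\Sigma\peq_\Sigma U_\Sigma$, then obtaining $\ps(\varphi\wedge\reach\varphi\psi)\in T$ from $T_\Sigma\peq_\Sigma U_\Sigma$ and applying Axiom~2. Your treatment is slightly more explicit than the paper's in spelling out the contrapositive argument for the diamond step (via $\nc\neg(\varphi\wedge\reach\varphi\psi)\in\Sigma$ as a subformula of $\ps(\varphi\wedge\reach\varphi\psi)$), and your inductive case correctly carries the full disjunctive hypothesis on $S$ rather than just the $\psi\in S$ disjunct, but the arguments are otherwise identical.
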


\begin{proof}
By induction on the structure of $R^\varphi_\Sigma$ from $T_\Sigma$ to $S_\Sigma$. \\
In the base case, there is $U_\Sigma$ such that $\varphi\in U$, $T_\Sigma \peq_\Sigma 
U_\Sigma$ and $S_\Sigma \peq_\Sigma U_\Sigma$. \\
Since $\reach \varphi \psi \in \Sigma$ and $\Sigma$ is adequate, we have $\nc (\varphi \to \reach \varphi \psi) \in \Sigma$.
Suppose $\psi \in S$ or $\varphi\wedge \reach\varphi\psi\in S$.
By Axiom 1 and Modus Ponens, $\nc (\varphi \to \reach \varphi \psi) \in S$. From $S_\Sigma \peq_\Sigma U_\Sigma$ it follows that $\nc (\varphi \to \reach \varphi \psi) \in U$ and by the reflexivity axiom of $\sf S4$ we get $\varphi \to \reach \varphi \psi\in U$.
Since $\varphi\in U$, we may conclude $\reach \varphi \psi\in U$.
By adequacy of $\Sigma$ we also have $\ps(\varphi \wedge \reach \varphi \psi ) \in  \Sigma$ and since $T_\Sigma \peq_\Sigma U_\Sigma$, $\varphi\wedge\reach \varphi \psi\in U$, we obtain $\ps(\varphi \wedge \reach \varphi \psi ) \in  T$. By Axiom 2, $\reach \varphi \psi \in  T_\Sigma$, as required.

For the inductive case we have that there is $U$ with $\varphi\in U$, $T_\Sigma \mathrel R^\varphi_\Sigma U_\Sigma$, and $U_\Sigma \mathrel R^\varphi_\Sigma S_\Sigma$.
By the induction hypothesis, since $\psi\in S$ and $U_\Sigma \mathrel R^\varphi_\Sigma S_\Sigma$ we obtain $\reach\varphi\psi\in U$. It follows that $\varphi\wedge \reach\varphi\psi\in U$ and since $T_\Sigma \mathrel R^\varphi_\Sigma U_\Sigma$, once again by the induction hypothesis, we have $\reach \varphi\psi\in T$.
\end{proof}

\smallskip

\begin{definition}
Let $\Sigma$ be an adequate and finite set. For $T_\Sigma\in W_\Sigma$ we define $\chi (T_\Sigma) = \bigwedge(T\cap\Sigma)$ (note that this is well defined by the definition of $\sim_\Sigma$).
\end{definition}

\smallskip

\begin{lemma}
Let $\Sigma$ be an adequate and finite set. Suppose $U_\Sigma \in W_\Sigma$ is a world of the filtrated model, $\varphi\in\Sigma$ and let $\chi = \bigvee\{\chi(s):s\in R^\varphi(U_\Sigma )\}$.
Then,
\begin{enumerate}

\item ${\sf ALR} \vdash \ps(\varphi\wedge\chi)\to\chi$

\item ${\sf ALR} \vdash  \varphi\wedge\chi\to \nc(\varphi\to\chi)$.

\end{enumerate}
\end{lemma}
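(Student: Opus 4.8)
The plan is to write $A = R^\varphi(U_\Sigma)$ for the set of worlds $\mathrel{R^\varphi_\Sigma}$-reachable from $U_\Sigma$, so that $\chi = \bigvee_{s \in A}\chi(s)$, and then to reduce both claims to facts about individual characteristic formulas. I would rely on three standard properties of characteristic formulas over a finite adequate $\Sigma$. First, each $\chi(t)$ decides every $\theta \in \Sigma$: $\vdash \chi(t) \to \theta$ when $\theta \in t$ and $\vdash \chi(t) \to \neg\theta$ otherwise; in particular $\chi(t)$ is a disjunct of $\chi$ exactly when $t \in A$. Second, $\vdash \bigvee_{t \in W_\Sigma}\chi(t)$, since every consistent set extends to a canonical world whose $\Sigma$-type is some $\chi(t)$. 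Third, a transition property linking $\ps$ to $\peq_\Sigma$: if $\chi(t) \wedge \ps\chi(s)$ is consistent, then $t \peq_\Sigma s$. This last fact follows by taking a canonical world $w$ realizing the conjunction (so $w_\Sigma = t$), applying Lemma~\ref{l:diam-witnessing} to obtain a witness $v \seq_{\mathrm c} w$ with $\chi(s) \in v$ (so $v_\Sigma = s$), and using that $w \peq_{\mathrm c} v$ entails $w_\Sigma \peq_\Sigma v_\Sigma$.

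For the first claim, since $\ps$ distributes over finite disjunctions I would reduce $\vdash \ps(\varphi \wedge \chi) \to \chi$ to proving $\vdash \ps(\varphi \wedge \chi(s)) \to \chi$ for each $s \in A$. Multiplying the antecedent by the theorem $\bigvee_{t}\chi(t)$, it suffices to show for every $t$ that $\chi(t) \wedge \ps(\varphi \wedge \chi(s))$ is either refutable or implies $\chi$. If it is consistent, the transition property gives $t \peq_\Sigma s$, while the $\ps$-witness also contains $\varphi$, so $\varphi \in s$. From this I obtain $t \in A$ using only the inductive definition of $\mathrel{R^\varphi_\Sigma}$ and its symmetry: $t \peq_\Sigma s$ together with $\varphi \in s$ gives $t \mathrel{R^\varphi_\Sigma} s$ by the base clause (witness $s$), hence $s \mathrel{R^\varphi_\Sigma} t$ by symmetry, and chaining $U_\Sigma \mathrel{R^\varphi_\Sigma} s \mathrel{R^\varphi_\Sigma} t$ through the $\varphi$-point $s$ yields $U_\Sigma \mathrel{R^\varphi_\Sigma} t$. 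Thus $t \in A$, so $\chi(t)$ is a disjunct of $\chi$ and $\chi(t) \to \chi$; the refutable disjuncts implying $\chi$ trivially, the disjunction over $t$ closes the case.

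The second claim is the more delicate one, because $\nc$ does not distribute over the disjunction defining $\chi$. Here I would first isolate a box-lemma, $\vdash \chi(s) \to \nc\bigvee_{s \peq_\Sigma t}\chi(t)$, proved by the same canonical-model argument as the transition property: were $\chi(s) \wedge \ps\neg\bigvee_{s \peq_\Sigma t}\chi(t)$ consistent, a $\ps$-witness $v$ would satisfy $s \peq_\Sigma v_\Sigma$ while its type lay outside the $\peq_\Sigma$-successors of $s$, a contradiction. Granting this, I decompose $\varphi \wedge \chi = \bigvee_{s \in A}(\varphi \wedge \chi(s))$ and fix $s \in A$, assuming $\varphi \in s$ (otherwise $\varphi \wedge \chi(s)$ is refutable). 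By the box-lemma and monotonicity of $\nc$, proving $\vdash \chi(s) \to \nc(\varphi \to \chi)$ reduces to checking $\vdash \chi(t) \wedge \varphi \to \chi$ for each $\peq_\Sigma$-successor $t$ of $s$. If $\varphi \notin t$ this is vacuous; if $\varphi \in t$, then $s \in A$, $\varphi \in s$ and $s \peq_\Sigma t$ give $s \mathrel{R^\varphi_\Sigma} t$ (base clause, witness $t$) and hence $U_\Sigma \mathrel{R^\varphi_\Sigma} t$ through $s$, so $t \in A$ and $\chi(t) \to \chi$. Disjoining over $s \in A$ yields the claim.

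I expect the only genuine work to be the box-lemma and, throughout, the bookkeeping that turns the semantic closure of $A$ under ``$\peq_\Sigma$-comparability through a $\varphi$-point'' into the inductive clauses of $\mathrel{R^\varphi_\Sigma}$; the symmetry of $\mathrel{R^\varphi_\Sigma}$ is exactly what makes both the downward step of the first claim and the upward step of the second go through. The modal content is light, reducing to distribution of $\ps$ over disjunctions, monotonicity of $\nc$, and the two canonical-model inputs already available, so the argument is an assembly of known pieces rather than a new idea.
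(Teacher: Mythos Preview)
Your argument is correct, but it is packaged quite differently from the paper's. The paper proves each item by a single contradiction: assuming the implication is not derivable, it picks a canonical theory $T$ realizing the negated implication, uses the observation that $\chi\in S$ iff $S_\Sigma\in R^\varphi(U_\Sigma)$, produces a $\ps$-witness $S\seq_{\mathrm c}T$, and then applies exactly the same closure step you use (base clause plus the inductive clause of $R^\varphi_\Sigma$, with symmetry implicit) to force the missing membership in $A$ and obtain a contradiction. There is no decomposition of $\chi$ into its disjuncts, no enumeration over $t\in W_\Sigma$, and no separate box-lemma $\chi(s)\to\nc\bigvee_{s\peq_\Sigma t}\chi(t)$; all of that work is absorbed into the single canonical counterexample.

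What your approach buys is explicitness: the syntactic case analysis makes transparent which $\Sigma$-types are being ruled out and exactly where the closure of $A$ under ``$\peq_\Sigma$-comparability through a $\varphi$-point'' enters. What the paper's approach buys is economy: the contradiction argument needs only the single observation about $\chi$ and avoids the auxiliary completeness-of-characteristic-formulas and box-lemma steps entirely. The underlying semantic content---that $A$ is closed under moving to a $\peq_\Sigma$-comparable $\varphi$-world---is identical in both.
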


\begin{proof}
	
We will make use of the easily seen observation that for any $S\in W_c$ we have $\chi\in S$ iff $S_\Sigma\in R^\varphi(U_\Sigma)$.
	
To prove the first item, suppose that $\ps(\varphi\wedge\chi)\to\chi$ is not provable, i.e.~$ \neg \chi \wedge \ps(\varphi\wedge\chi)$ is consistent.
Then there is $T\in W_\mathrm c$ with $\ps(\varphi\wedge\chi)\wedge \neg \chi\in T$. By Lemma~\ref{l:diam-witnessing}, there exists $S\seq_\mathrm c T$ such that $\varphi\wedge\chi\in S$.
Since $\chi\in S$, we have that $S_\Sigma \in R^\varphi(U_\Sigma) $. On the other hand, since
$\varphi \in S$ and $T_\Sigma \peq_\Sigma S_\Sigma\seq_\Sigma S_\Sigma$, we also have $T_\Sigma \mathrel R^\varphi_\Sigma S_\Sigma$ and hence $T_\Sigma \in R^\varphi(U_\Sigma)$, implying $\chi 
\in T$, a contradiction.

Suppose now that $\varphi\wedge\chi\to \nc(\varphi\to\chi)$ is not provable, i.e. 
$(\varphi \land \chi) \land \ps \lnot(\varphi \to \chi)$ is consistent. 
Then there is $T \in W_c$ such that 
$(\varphi \land \chi) \land \ps (\varphi \land \lnot \chi) \in T$, and again by Lemma~\ref{l:diam-witnessing} we get $S \seq_c T$ such that $\varphi \land \lnot \chi \in S$. Since $\chi 
\in T$, we have that $T_\Sigma \in R^\varphi (U_\Sigma)$. 
We now note that $\varphi \in S$, and thus $T_\Sigma \mathrel R^\varphi S_\Sigma$, as
$T_\Sigma\peq_\Sigma S_\Sigma \seq_\Sigma S_\Sigma$. Then $U_\Sigma \mathrel R^\varphi S_\Sigma$ also holds, implying $\chi \in S$, again a contradiction. 
\end{proof}

\smallskip

\begin{lemma}\label{l:compl}
If $\reach\varphi\psi\in T\cap \Sigma$, then there exists $S$ with $T_\Sigma \mathrel R^\varphi_\Sigma S_\Sigma$ and $\psi\in S$.
\end{lemma}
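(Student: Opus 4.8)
The plan is to prove the statement by contradiction, feeding the two provable implications of the preceding lemma into the induction Rule 2. Fix $T$ with $\reach\varphi\psi \in T\cap\Sigma$ (so that $\varphi,\psi\in\Sigma$ by adequacy) and assume, toward a contradiction, that no $S$ satisfies both $T_\Sigma\mathrel R^\varphi_\Sigma S_\Sigma$ and $\psi\in S$. I would instantiate the preceding lemma at the world $U_\Sigma=T_\Sigma$, obtaining the formula $\chi=\bigvee\{\chi(s):s\in R^\varphi(T_\Sigma)\}$, and recall its governing observation that, for every $S\in W_\mathrm c$, we have $\chi\in S$ iff $S_\Sigma\in R^\varphi(T_\Sigma)$. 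Thus $\chi$ serves as a characteristic formula for the worlds $\varphi$-reachable from $T_\Sigma$.

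First I would turn the contradiction hypothesis into a theorem. Since by assumption every world $\varphi$-reachable from $T_\Sigma$ refutes $\psi$, the observation gives that $\chi\in S$ implies $\psi\notin S$ for every $S\in W_\mathrm c$; as this holds across all maximal consistent sets, ${\sf ALR}\vdash\psi\to\neg\chi$. Feeding $\varphi\to\varphi$ and $\psi\to\neg\chi$ into Rule 1 then yields ${\sf ALR}\vdash\reach\varphi\psi\to\reach\varphi{\neg\chi}$, so $\reach\varphi{\neg\chi}\in T$.

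The heart of the argument is to check that $\neg\chi$ meets the two premises of the induction Rule 2, and here the two items of the preceding lemma enter through contraposition. From ${\sf ALR}\vdash\ps(\varphi\wedge\chi)\to\chi$ I obtain, using the $\sf S4$ dualities, ${\sf ALR}\vdash\neg\chi\to\nc(\varphi\to\neg\chi)$, which is precisely the first premise of Rule 2 with $\neg\chi$ in the $\psi$-slot; and from ${\sf ALR}\vdash\varphi\wedge\chi\to\nc(\varphi\to\chi)$ I obtain ${\sf ALR}\vdash\varphi\wedge\ps(\varphi\wedge\neg\chi)\to\neg\chi$, which is the second premise. Applying Rule 2 then gives ${\sf ALR}\vdash\reach\varphi{\neg\chi}\to\ps(\varphi\wedge\neg\chi)$, whence $\ps(\varphi\wedge\neg\chi)\in T$.

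It remains to close the loop. By Lemma~\ref{l:diam-witnessing} there is $S\seq_{\mathrm c}T$ with $\varphi\wedge\neg\chi\in S$; in particular $\varphi\in S$ and $T_\Sigma\peq_\Sigma S_\Sigma$, so the base clause of $R^\varphi_\Sigma$ (taking $U=S$, using $T_\Sigma\peq_\Sigma S_\Sigma\seq_\Sigma S_\Sigma$) gives $T_\Sigma\mathrel R^\varphi_\Sigma S_\Sigma$, i.e.\ $S_\Sigma\in R^\varphi(T_\Sigma)$. The observation then forces $\chi\in S$, contradicting $\neg\chi\in S$. This contradiction proves the lemma. I expect the main obstacle to be essentially bookkeeping: matching the contraposed forms of the two implications from the preceding lemma to the exact shapes of the two premises of Rule 2 for the formula $\neg\chi$, and keeping careful track that $\chi$ is the characteristic formula attached specifically to $U_\Sigma=T_\Sigma$.
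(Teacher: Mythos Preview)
Your proposal is correct and follows essentially the same approach as the paper: instantiate $\chi$ at $U_\Sigma=T_\Sigma$, contrapose the two items of the preceding lemma to obtain the premises of Rule~2 for $\neg\chi$, use the contradiction hypothesis with Rule~1 to pass from $\reach\varphi\psi$ to $\reach\varphi{\neg\chi}$, and then extract a $\ps$-witness to force $\chi\in S$ against $\neg\chi\in S$. The only difference is cosmetic ordering (you apply Rule~1 before Rule~2, the paper after), which has no bearing on the argument.
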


\begin{proof}
Toward a contradiction, assume that $\reach\varphi\psi\in T$ and for all $S_\Sigma \in R^\varphi(T_\Sigma)$ we have that $\psi \notin S$.
Let $\chi$ be as in the previous lemma, for $U_\Sigma=T_\Sigma$. Taking contrapositives we obtain:
\begin{enumerate}

\item ${\sf ALR} \vdash \neg\chi\to \nc (\varphi\to\neg \chi)$

\item ${\sf ALR} \vdash \varphi\wedge \ps(\varphi\wedge\neg\chi)  \to \neg\chi$.

\end{enumerate}
By Rule 2, ${\sf ALR} \vdash  \reach \varphi{\neg\chi} \to \ps(\varphi\wedge\neg \chi)$.

Further, since $\psi \not \in S$ for each $S_\Sigma \in R^\varphi(T_\Sigma)$, we get ${\sf ALR} \vdash \chi \to \lnot
\psi$, hence ${\sf ALR} \vdash \psi \to\neg \chi$. By Rule 1,  ${\sf ALR} \vdash  \reach \varphi{\psi} \to \reach \varphi{\neg\chi}$. Together with ${\sf ALR} \vdash  \reach \varphi{\neg\chi} \to \ps(\varphi\wedge\neg \chi)$ this yields ${\sf ALR} \vdash  \reach \varphi{\psi} \to \ps(\varphi\wedge\neg \chi)$.

Since we assumed $\reach\varphi\psi\in T$ at the outset, $\ps(\varphi\wedge\neg \chi)\in T$ follows, and hence there is $S\seq_c T$ such that $\varphi\wedge\neg\chi \in S$.
Since $S_\Sigma\seq_\Sigma T_\Sigma$ and $\varphi\in S$, we obtain $T_\Sigma\mathrel R^\varphi_\Sigma S_\Sigma$, thus by our definition of $\chi$, $ \chi \in S$, contradicting $\neg\chi\in S$.
\end{proof}

With this we are able to prove a key filtration lemma.

\begin{lemma}\label{lem:quotient}
Let $\Sigma$ be a finite adequate set of formulas and let $\varphi\in \Sigma$. Then for each $T\in W_c$ we have $\varphi\in T$ iff $\mathcal M_\Sigma,T_\Sigma\models \varphi$.
\end{lemma}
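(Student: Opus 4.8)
The plan is to prove the biconditional by induction on the structure of $\varphi$, exploiting adequacy of $\Sigma$ so that every subformula encountered, together with the auxiliary formulas $\nc(\varphi\to\reach\varphi\psi)$ and $\ps(\varphi\wedge\reach\varphi\psi)$, again lies in $\Sigma$ and the induction hypothesis is available. The atomic case is immediate from the definition of $\val p_\Sigma$, and the Boolean cases $\neg\varphi$ and $\varphi_1\wedge\varphi_2$ are routine, using maximal consistency of the theories together with closure of $\Sigma$ under subformulas and single negations.

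For the case $\nc\varphi$ I would run the standard filtration argument. For the forward direction, if $\nc\varphi\in T$ then $\nc\varphi\in T\cap\Sigma$, so for any $S_\Sigma$ with $T_\Sigma\peq_\Sigma S_\Sigma$ the definition of $\peq_\Sigma$ gives $\nc\varphi\in S$, whence $\varphi\in S$ by the reflexivity axiom $\nc\varphi\to\varphi$ of $\sf S4$; the induction hypothesis then yields $S_\Sigma\models\varphi$, so $T_\Sigma\models\nc\varphi$. For the backward direction I argue contrapositively: if $\nc\varphi\notin T$, then $\ps\neg\varphi\in T$, so by Lemma~\ref{l:diam-witnessing} there is $S\seq_c T$ with $\neg\varphi\in S$; since $S\seq_c T$ gives $T_\Sigma\peq_\Sigma S_\Sigma$ (the preservation property noted after the definition of $\mathcal M_\Sigma$) and $\varphi\in\Sigma$ with $\varphi\notin S$, the induction hypothesis gives $S_\Sigma\not\models\varphi$, refuting $\nc\varphi$ at $T_\Sigma$.

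The case $\reach\varphi\psi$ is the crux, and here the real work has already been done in Lemmas~\ref{sound} and~\ref{l:compl}; the task is to assemble them. First I would use the induction hypothesis for $\varphi$ to identify the syntactically defined relation $R^\varphi_\Sigma$ with the semantic reachability relation of the Alexandroff model $\mathcal M_\Sigma$: since $\varphi\in\Sigma$, we have $\varphi\in U$ iff $U_\Sigma\in\val\varphi_\Sigma$, so the two relations are defined by the same recursion and therefore coincide. By Lemma~\ref{lemmUDR} the path-based clause for $\reach\varphi\psi$ in $\mathcal M_\Sigma$ reduces to the existence of some $S_\Sigma$ with $T_\Sigma\mathrel R^\varphi_\Sigma S_\Sigma$ and $S_\Sigma\models\psi$, and since $\psi\in\Sigma$ the induction hypothesis turns $S_\Sigma\models\psi$ into $\psi\in S$. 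Thus $\mathcal M_\Sigma,T_\Sigma\models\reach\varphi\psi$ is equivalent to the existence of $S$ with $T_\Sigma\mathrel R^\varphi_\Sigma S_\Sigma$ and $\psi\in S$. Lemma~\ref{l:compl} supplies exactly such an $S$ whenever $\reach\varphi\psi\in T$, giving one implication, while Lemma~\ref{sound} (invoked with the disjunct $\psi\in S$) delivers $\reach\varphi\psi\in T$ from such an $S$, giving the converse.

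The main obstacle is precisely this last case, and within it the careful matching of $R^\varphi_\Sigma$ with the semantic reachability relation: one must keep track of which formulas adequacy forces into $\Sigma$ (so that the induction hypothesis is genuinely available for $\varphi$, for $\psi$, and for the modalized auxiliary formulas used inside Lemmas~\ref{sound} and~\ref{l:compl}) and verify that the syntactic and semantic recursions for reachability coincide under the induction hypothesis. Once this bookkeeping is in place, the two preceding lemmas close the argument immediately.
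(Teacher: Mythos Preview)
Your proposal is correct and follows exactly the approach the paper indicates: a standard structural induction that appeals to Lemma~\ref{l:diam-witnessing} for the $\nc$ case and to Lemmas~\ref{sound} and~\ref{l:compl} for the $\gamma$ case. Your explicit identification of the syntactic relation $R^\varphi_\Sigma$ with the semantic reachability relation in $\mathcal M_\Sigma$ via the induction hypothesis (together with Lemma~\ref{lemmUDR}) is the right way to fill in the one step the paper leaves implicit.
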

\begin{proof}
    Standard induction on formulas using lemmas~\ref{l:diam-witnessing},~\ref{sound} and \ref{l:compl}.
\end{proof}

\begin{thm}
$\sf ALR$ is complete for the class of Alexandroff models and has the finite model property.
\end{thm}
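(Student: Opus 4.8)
The plan is to obtain the theorem as an immediate corollary of the filtration machinery, extracting a single finite Alexandroff model that refutes any given non-theorem; this yields completeness and the finite model property at once.

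First I would prove the contrapositive of completeness. Suppose $\varphi$ is not a theorem of $\sf ALR$. Then $\neg\varphi$ is $\sf ALR$-consistent, so by Lindenbaum's lemma it belongs to some maximal consistent set $T\in W_\mathrm c$; in particular $\varphi\notin T$. I would then let $\Sigma$ be the smallest adequate set containing $\varphi$, which is finite by Lemma~\ref{l:adequate-finite}, and form the filtrated model $\mathcal M_\Sigma=(W_\Sigma,\peq_\Sigma,\val\cdot_\Sigma)$.

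Next I would verify that $\mathcal M_\Sigma$ is a finite Alexandroff model. Finiteness is immediate, since $T\sim_\Sigma S$ holds exactly when $T\cap\Sigma=S\cap\Sigma$; thus the assignment $T_\Sigma\mapsto T\cap\Sigma$ embeds $W_\Sigma$ into $\power{\Sigma}$, so $|W_\Sigma|\le 2^{|\Sigma|}$. That $\peq_\Sigma$ is a preorder is routine: reflexivity is immediate from the definition, and for transitivity, if $T_\Sigma\peq_\Sigma S_\Sigma\peq_\Sigma U_\Sigma$ and $\nc\theta\in T\cap\Sigma$, then $\nc\theta\in S$, and since $\nc\theta\in\Sigma$ this gives $\nc\theta\in S\cap\Sigma$, whence $\nc\theta\in U$. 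Hence $\mathcal M_\Sigma$ is a genuine (finite) Alexandroff model, and Definition~\ref{defSem} applies to it.

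Finally, since $\varphi\in\Sigma$ and $\varphi\notin T$, the filtration Lemma~\ref{lem:quotient} gives $\mathcal M_\Sigma,T_\Sigma\not\models\varphi$. Thus every non-theorem of $\sf ALR$ is refuted on a finite Alexandroff model, which, together with soundness (Proposition~\ref{prop:sound}), establishes both completeness for the class of Alexandroff models and the finite model property. I expect no real obstacle at this final step: the genuine difficulty, namely the correct behaviour of the reachability modality under filtration, has already been discharged inside Lemma~\ref{lem:quotient}, which in turn rests on the soundness of $R^\varphi_\Sigma$ (Lemma~\ref{sound}, via Axioms 1 and 2) and its completeness counterpart (Lemma~\ref{l:compl}, via the inductive Rule 2). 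What remains here is only the assembly of these components.
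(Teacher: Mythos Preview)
Your proposal is correct and follows essentially the same approach as the paper's proof: both take a consistent formula, extend it to a maximal consistent set, form the filtration $\mathcal M_\Sigma$ over a finite adequate set, and invoke Lemma~\ref{lem:quotient} to transfer (non)membership to (non)satisfaction in the finite model. Your version spells out a few details the paper leaves implicit (the bound $|W_\Sigma|\le 2^{|\Sigma|}$ and the verification that $\peq_\Sigma$ is a preorder), but the structure is the same.
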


\begin{proof}
Suppose ${\sf ALR}\not\vdash\neg\varphi$. Then there is $T\in W_c$ with $\varphi\in T$. By Lemma~\ref{l:adequate-finite} there exists a finite adequate set $\Sigma$ containing $ \mathrm{sub}(\varphi)$. It follows from Lemma~\ref{lem:quotient} that  $\mathcal M_\Sigma, T_\Sigma\models\varphi$. Since $\mathcal M_\Sigma$ is finite, the proof is finished.
\end{proof}

\section{Polyhedral completeness}\label{sec:grz}

Now that we have obtained completeness for Alexandroff spaces, it remains to show that $\sf PLR$ is complete for the class of finite poset models, hence by Theorem~\ref{theoKripkePoly}, also for the class of polyhedral logics.
The proof builds on that for $\sf ALR$, but requires some additional steps.

For a given adequate set $\Sigma$ we define $\Sigma_1 = \Sigma \cup \{ \ps (\lnot \varphi \land \ps \varphi) \mid \ps \varphi \in \Sigma \} \cup \{ \ps (\varphi \land \lnot \psi) \mid \reach \varphi \psi \in \Sigma \}$ and let $\Sigprime$ be the smallest adequate set containing $\Sigma_1$. It is a straightforward consequence of Lemma~\ref{l:adequate-finite} that if $\Sigma$ is finite, $\Sigprime$ is also finite.

Given a finite adequate set $\Sigma$ we now take the filtration of the canonical model for $\sf PLR$ via $\Sigprime$ to obtain a finite model $\mathcal{M}_{\Sigprime}$. We know from Lemma~\ref{lem:quotient} that for $\varphi\in\widehat\Sigma$, $\mathcal M_{\Sigprime}, T_{\Sigprime} \models \varphi \iff \varphi \in T$. The model $\mathcal M_{\Sigprime}$ may however fail to be a poset. The next definition allows for a necessary transformation of $\mathcal M_{\Sigprime}$.

\begin{definition}
Let $\mathcal{M} = (W, \peq , \val\cdot)$ be an Alexandroff model. We define ${\rm cut}(\mathcal{M}) = (W', \peq ', \val\cdot')$ where $W' = W$, $\val p' = \val p$ for all atomic propositions and for all $x, y \in W'$, $x \peq ' y \iff  x =y$ or $x\prec y$\footnote{Recall that $x\prec y$ is a shorthand for $x\peq y$ and $y\not\peq x$.}.
\end{definition}

Note that if $\peq$ is a preorder, then $\peq '$ is a partial order, so ${\rm cut}(\mathcal{M})$ is a finite poset model.

Let $\cut = {\rm cut}(\mathcal{M}_{\Sigprime})$, and let  $\peq^{{\rm cut}}$ be the accessibility relation in $\cut$.

\begin{lemma}\label{lem:diamGrz}
Let $\ps \varphi\in\Sigma$ and suppose for some $T\in W_c$ we have $\ps \varphi\in T$, $\varphi \not \in T$. Then there is $S\in W_c$ such that $T_{\Sigprime} \prec_{\Sigprime} S_{\Sigprime}$ and $\mathcal{M}_{\Sigprime}, S_{\Sigprime} \models \varphi$.
\end{lemma}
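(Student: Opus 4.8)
The plan is to exhibit a concrete witness $S\in W_c$ and verify the two required properties. Write $\beta = \neg\varphi\wedge\ps\varphi$. Since $\varphi\notin T$ and $\ps\varphi\in T$, both conjuncts of $\beta$ lie in the theory $T$, so $\beta\in T$, and by reflexivity of $\sf S4$ also $\ps\beta\in T$; equivalently, $\nc\neg\beta\notin T$. The crucial observation is that, under the abbreviation $\ps\chi\equiv\neg\nc\neg\chi$, the box-formula $\nc\neg\beta$ is exactly the single negation of $\ps(\neg\varphi\wedge\ps\varphi)$. By construction $\ps(\neg\varphi\wedge\ps\varphi)\in\Sigma_1\subseteq\Sigprime$ (this is precisely why $\Sigprime$ was enlarged by these formulas), so closure of $\Sigprime$ under single negations gives $\nc\neg\beta\in\Sigprime$, a genuine box-formula usable in the definition of $\peq_\Sigprime$. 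Our target is therefore an $S\seq_c T$ with $\varphi\in S$ and $\nc\neg\beta\in S$: the first conjunct yields $\mathcal M_\Sigprime,S_\Sigprime\models\varphi$ through Lemma~\ref{lem:quotient}, while $\nc\neg\beta\in S\cap\Sigprime$ together with $\nc\neg\beta\notin T$ forces $S_\Sigprime\not\peq_\Sigprime T_\Sigprime$.

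To produce such an $S$ I would invoke the $\sf Grz$ axiom. Combined with the $\sf T$ axiom it yields the reflexive form $\nc(\nc(p\to\nc p)\to p)\to p$, whose contrapositive is $\neg p\to\ps(\nc(p\to\nc p)\wedge\neg p)$. Substituting $p:=\neg\varphi$ and simplifying $\nc(\neg\varphi\to\nc\neg\varphi)$ to $\nc(\ps\varphi\to\varphi)=\nc\neg\beta$ gives the $\sf Grz$-theorem $\varphi\to\ps(\varphi\wedge\nc\neg\beta)$; applying $\ps$ and using the $\sf S4$-validity $\ps\ps\chi\to\ps\chi$ upgrades this to
\[
{\sf PLR}\vdash \ps\varphi\to\ps(\varphi\wedge\nc\neg\beta).
\]
Since $\ps\varphi\in T$, this theorem places $\ps(\varphi\wedge\nc\neg\beta)\in T$, and Lemma~\ref{l:diam-witnessing} then furnishes $S\seq_c T$ with $\varphi\wedge\nc\neg\beta\in S$.

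It remains to read off the conclusion. From $T\peq_c S$ we obtain $T_\Sigprime\peq_\Sigprime S_\Sigprime$ by the monotonicity of filtration recorded just after the definition of $\mathcal M_\Sigma$. Conversely, $\nc\neg\beta\in S\cap\Sigprime$ whereas $\nc\neg\beta\notin T$, so by the definition of $\peq_\Sigprime$ we get $S_\Sigprime\not\peq_\Sigprime T_\Sigprime$; hence $T_\Sigprime\prec_\Sigprime S_\Sigprime$. Finally $\varphi\in S\cap\Sigprime$ and Lemma~\ref{lem:quotient} give $\mathcal M_\Sigprime,S_\Sigprime\models\varphi$, as required.

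The only genuine work lies in the second paragraph: extracting from the $\sf Grz$ axiom a $\varphi$-point at which $\nc\neg\beta$ additionally holds, and — the design-dependent point — recognising that this box-formula is exactly the single negation of the formula $\ps(\neg\varphi\wedge\ps\varphi)$ that was deliberately inserted into $\Sigprime$, so that it can serve as the witness separating the cluster of $S_\Sigprime$ from that of $T_\Sigprime$ after cutting. Everything else is routine bookkeeping with the witnessing lemma and the definition of $\peq_\Sigprime$.
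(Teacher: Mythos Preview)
Your proof is correct and follows essentially the same line as the paper's: both use the $\mathsf{Grz}$ axiom to produce a witness $S\seq_c T$ with $\varphi\wedge\neg\ps(\neg\varphi\wedge\ps\varphi)\in S$, and both exploit that $\ps(\neg\varphi\wedge\ps\varphi)$ was deliberately placed in $\Sigprime$. The only cosmetic difference is that you establish $S_{\Sigprime}\not\peq_{\Sigprime}T_{\Sigprime}$ directly from the syntactic definition of $\peq_{\Sigprime}$ (using $\nc\neg\beta\in S\cap\Sigprime$ but $\nc\neg\beta\notin T$), whereas the paper argues semantically inside $\mathcal M_{\Sigprime}$, rewriting $\neg\ps(\neg\varphi\wedge\ps\varphi)$ as $\nc(\ps\varphi\to\varphi)$ and deriving $\mathcal M_{\Sigprime},T_{\Sigprime}\models\varphi$ from the assumption $S_{\Sigprime}\peq_{\Sigprime}T_{\Sigprime}$.
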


\begin{proof}
From $\ps \varphi \in T$, by axiom $\sf Grz$, we obtain $\ps (\varphi \land \lnot \ps (\lnot \varphi \land \ps \varphi))\in T$. Then by Lemma~\ref{l:diam-witnessing} there is $S\seq_c T$ with $\varphi \land \lnot \ps (\lnot \varphi \land \ps \varphi)\in S$. By the construction of $\Sigprime$, we have $\lnot \ps (\lnot \varphi \land \ps \varphi)\in\Sigprime$. Hence $T_{\Sigprime}\peq_{\Sigprime} S_{\Sigprime}$, $\mathcal{M}_{\Sigprime}, S_{\Sigprime} \models \varphi$ and $\mathcal{M}_{\Sigprime}, S_{\Sigprime} \models \lnot \ps (\lnot \varphi \land \ps \varphi)$. To see that in fact $T_{\Sigprime}\prec_{\Sigprime} S_{\Sigprime}$, suppose the contrary, that $S_{\Sigprime} \peq_{\Sigprime} T_{\Sigprime}$, and note that $\lnot \ps (\lnot \varphi \land \ps \varphi)$ can be rewritten as $\nc (\ps \varphi \to \varphi)$. Then since $\mathcal{M}_{\Sigprime}, T_{\Sigprime} \models \ps \varphi$, we get $\mathcal{M}_{\Sigprime}, T_{\Sigprime} \models \varphi$. But this implies $\varphi \in T$, which contradicts $\varphi \not \in T$.
\end{proof}

\begin{lemma}\label{lem:path}
For any $\reach \varphi \psi \in \Sigma$, if $\mathcal{M}_{\Sigprime}, T_{\Sigprime} \models \reach \varphi \psi$, then there exists an up-down path $(v_0,v_1,\ldots,v_k)$ witnessing this such that for all $i<k$, either $v_i=v_{i+1}$, $v_{i} \prec_{\Sigprime} v_{i+1}$, or  $v_{i} \succ_{\Sigprime} v_{i+1}$.
\end{lemma}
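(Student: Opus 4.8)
The plan is to take an arbitrary witnessing path and then repair only its two ends. By Lemma~\ref{lemmUDR}, the hypothesis $\mathcal{M}_{\Sigprime},T_{\Sigprime}\models\reach\varphi\psi$ provides an up-down path $(w_0,\ldots,w_k)$ in $\mathcal{M}_{\Sigprime}$ with $w_0=T_{\Sigprime}$, $w_k\models\psi$, and $w_i\models\varphi$ for all $0<i<k$. By the definition of up-down path, every internal step has the strict form $w_{2i-1}\succ_{\Sigprime} w_{2i}\prec_{\Sigprime} w_{2i+1}$; hence the only steps that can violate the conclusion, that is, the only possible proper cluster jumps where $w_i\neq w_{i+1}$ while $w_i,w_{i+1}$ are mutually $\peq_{\Sigprime}$-related, are the first step $w_0\peq_{\Sigprime} w_1$ and the last step $w_{k-1}\seq_{\Sigprime} w_k$. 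So it suffices to rewrite each of these two boundary steps into an equality or a strict step while leaving the remainder of the path intact.

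For the first step, suppose $w_0\peq_{\Sigprime} w_1$ is a proper cluster jump. If $w_0\models\varphi$, I would simply rename $w_1$ to $w_0$: since $w_0,w_1$ lie in one cluster and $w_2\peq_{\Sigprime} w_1$, an elementary computation gives $w_0\succ_{\Sigprime} w_2$, so $(w_0,w_0,w_2,\ldots,w_k)$ is again an up-down path whose first step is now an equality. If instead $w_0\not\models\varphi$, I would appeal to Lemma~\ref{lem:diamGrz}, applied not to $\varphi$ but to the conjunction $\varphi\wedge\reach\varphi\psi$, whose diamond $\ps(\varphi\wedge\reach\varphi\psi)$ lies in $\Sigma\subseteq\Sigprime$ by adequacy of $\Sigma$. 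Indeed the tail $(w_1,w_1,w_2,\ldots,w_k)$ witnesses $w_1\models\reach\varphi\psi$, so $w_1\models\varphi\wedge\reach\varphi\psi$ and therefore $w_0\models\ps(\varphi\wedge\reach\varphi\psi)$ while $w_0\not\models\varphi\wedge\reach\varphi\psi$. Passing to a representative of $w_0$ via Lemma~\ref{lem:quotient}, Lemma~\ref{lem:diamGrz} yields $S'$ with $w_0\prec_{\Sigprime} S'$ and $S'\models\varphi$, and from $w_2\peq_{\Sigprime} w_1\peq_{\Sigprime} w_0\prec_{\Sigprime} S'$ one checks $w_2\prec_{\Sigprime} S'$; replacing $w_1$ by $S'$ produces the strict initial segment $w_0\prec_{\Sigprime} S'\succ_{\Sigprime} w_2$.

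The last step $w_{k-1}\seq_{\Sigprime} w_k$ is treated symmetrically, the only asymmetry being that its outer endpoint satisfies $\psi$ rather than $\varphi$. If $w_k\models\varphi$ I rename $w_{k-1}$ to $w_k$, using $w_{k-2}\prec_{\Sigprime} w_k$ obtained exactly as before; otherwise $w_k\models\ps(\varphi\wedge\reach\varphi\psi)$ but $w_k\not\models\varphi\wedge\reach\varphi\psi$, and Lemma~\ref{lem:diamGrz} supplies $S''$ with $w_k\prec_{\Sigprime} S''$ and $S''\models\varphi$, giving the strict final segment $w_{k-2}\prec_{\Sigprime} S''\succ_{\Sigprime} w_k$. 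As the first repair only touches the vertex of index $1$ and the last only that of index $k-1$, the two are independent once $k\geq 4$; the case $k=2$ is already covered, since there a single application of Lemma~\ref{lem:diamGrz} makes both steps strict at once.

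I expect the main obstacle to lie in correctly invoking Lemma~\ref{lem:diamGrz}: its hypothesis demands a diamond formula inside the adequate set, which is precisely why one is forced to route through $\varphi\wedge\reach\varphi\psi$ rather than the bare $\varphi$, and one must verify that the relevant peaks satisfy this conjunction. The remaining work, namely the small in-cluster order computations establishing that each rerouted step is genuinely strict and that the up-down alternation survives, is routine but needs care, particularly in the bookkeeping for short paths where the two endpoint repairs interact.
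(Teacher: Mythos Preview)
Your approach coincides with the paper's: take a witnessing up-down path via Lemma~\ref{lemmUDR}, note that only the two extremal steps can fail to be strict, and repair each by replacing the peak with a strict successor furnished by Lemma~\ref{lem:diamGrz}. The paper is terser—it merely invokes Lemmas~\ref{lem:diamU} and~\ref{lem:diamGrz} without your case split on whether $w_0\models\varphi$ or your explicit identification of $\ps(\varphi\wedge\reach\varphi\psi)\in\Sigma$ (guaranteed by adequacy) as the diamond formula to which Lemma~\ref{lem:diamGrz} actually applies—so your version is a more careful rendition of the same argument.
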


\begin{proof}
If $\mathcal{M}_{\Sigprime}, T_{\Sigprime} \models \reach \varphi \psi$ then by Lemma~\ref{lemmUDR} we have a path $(v_0,v_1,\ldots,v_k)$ satisfying all of the above properties except possibly when $i=0$ or $i=k-1$.
If $v_0 \prec_{\Sigprime} v_1$ fails and $v_0 \neq v_1$ then $v_0 \peq_{\Sigprime} v_1$ and $v_0 \seq_{\Sigprime} v_1$.
By lemmas~\ref{lem:diamU} and \ref{lem:diamGrz} there is $S\in W_c$ such that $\mathcal{M}_{\Sigprime}, S_{\Sigprime} \models   \varphi  $ and $T_{\Sigprime} \prec_{\Sigprime} S_{\Sigprime}$.
We thus obtain a new up-down path $(\tilde  v_0,\ldots,\tilde  v_k) : = (v_0,S_{\Sigprime},v_2,\ldots,v_k)$ where $\tilde  v_0 \prec_{\Sigprime} \tilde  v_1$.
If $\tilde  v_{k-1} \succ_{\Sigprime} \tilde  v_k$ fails and $\tilde  v_{k-1} \neq \tilde  v_k$, we likewise choose $U $ such that $\mathcal{M}_{\Sigprime}, U_{\Sigprime} \models   \varphi  $ and $\tilde  v_k \prec_{\Sigprime} U_{\Sigprime}$, and replace $\tilde  v_{k-1}$ by $U_{\Sigprime}$.
The resulting path has all required properties.
\end{proof}

\smallskip

\begin{lemma}\label{lem:cutmodel}
If $\varphi \in \Sigma$ and $T\in W_c$ is arbitrary, we have:
\[\mathcal{M}_{\Sigprime}, T_{\Sigprime} \models \varphi \iff \cut, T_{\Sigprime} \models \varphi.\]
\end{lemma}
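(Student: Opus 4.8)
The plan is to prove the biconditional by structural induction on $\varphi \in \Sigma$, following the standard template for modal filtration arguments but paying special attention to the two modalities $\ps$ and $\gamma$, which are exactly where $\cut$ and $\mathcal{M}_{\Sigprime}$ might differ. Since $W' = W$ and the valuation is unchanged under the cut operation, the atomic case is immediate, and the Boolean cases ($\lnot$ and $\land$) follow routinely from the induction hypothesis. The whole content of the lemma therefore lives in the modal cases.

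\smallskip

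For the case $\ps\varphi$, I would argue as follows. The relation $\peq^{\rm cut}$ is contained in $\peq_{\Sigprime}$ (since $x \prec y$ implies $x \peq y$), so every $\peq^{\rm cut}$-successor is a $\peq_{\Sigprime}$-successor, which immediately gives the implication $\cut, T_{\Sigprime} \models \ps\varphi \Rightarrow \mathcal{M}_{\Sigprime}, T_{\Sigprime} \models \ps\varphi$. The interesting direction is the converse. Suppose $\mathcal{M}_{\Sigprime}, T_{\Sigprime} \models \ps\varphi$. If $\mathcal{M}_{\Sigprime}, T_{\Sigprime} \models \varphi$ itself, then by the induction hypothesis $\cut, T_{\Sigprime} \models \varphi$, and since $\peq^{\rm cut}$ is reflexive this witnesses $\ps\varphi$ in $\cut$. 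Otherwise $\varphi \notin T$ (using Lemma~\ref{lem:quotient}, valid since $\varphi \in \Sigprime$), and $\ps\varphi \in T$, so Lemma~\ref{lem:diamGrz} supplies some $S$ with $T_{\Sigprime} \prec_{\Sigprime} S_{\Sigprime}$ and $\mathcal{M}_{\Sigprime}, S_{\Sigprime} \models \varphi$. The strict successor $S_{\Sigprime}$ is a $\peq^{\rm cut}$-successor of $T_{\Sigprime}$, and by the induction hypothesis $\cut, S_{\Sigprime}\models\varphi$, giving $\cut, T_{\Sigprime} \models \ps\varphi$. This is precisely why $\Sigprime$ was enlarged to include the formulas $\ps(\lnot\varphi \land \ps\varphi)$: to make Lemma~\ref{lem:diamGrz} available here.

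\smallskip

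For the reachability case $\reach\varphi\psi$, the strategy is symmetric in spirit but uses paths. One direction reduces to transporting an up-down path in $\cut$ to one in $\mathcal{M}_{\Sigprime}$: again $\peq^{\rm cut} \subseteq \peq_{\Sigprime}$ means any witnessing up-down path for $\cut$ is also one for $\mathcal{M}_{\Sigprime}$, and the induction hypothesis handles the labels $\varphi$ on intermediate points and $\psi$ on the endpoint. For the converse, assume $\mathcal{M}_{\Sigprime}, T_{\Sigprime} \models \reach\varphi\psi$. Here I would invoke Lemma~\ref{lem:path} to extract a witnessing up-down path $(v_0,\ldots,v_k)$ in which every consecutive pair is either equal, strictly ascending, or strictly descending. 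Each non-trivial step of this path is a $\peq^{\rm cut}$- or $\seq^{\rm cut}$-step, so the same sequence is an up-down path in $\cut$; the induction hypothesis converts the modal labels on intermediate and final points, yielding $\cut, T_{\Sigprime} \models \reach\varphi\psi$ via Lemma~\ref{lemmUDR}.

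\smallskip

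The main obstacle, and the crux of the whole argument, is the converse direction of the modal cases, where one must move from an arbitrary $\peq_{\Sigprime}$-witness to a strict $\prec_{\Sigprime}$-witness that survives cutting. The cutting operation deletes exactly the non-trivial same-cluster relations, so the danger is that a $\ps$- or $\gamma$-witness in $\mathcal{M}_{\Sigprime}$ relies on a step within a cluster that vanishes in $\cut$. Lemmas~\ref{lem:diamGrz} and~\ref{lem:path} are engineered precisely to eliminate such in-cluster steps by replacing them with genuinely strict ones, and the careful enrichment of $\Sigma$ to $\Sigprime$ is what guarantees these lemmas apply to the formulas in $\Sigma$. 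Handling the boundary indices $i=0$ and $i=k-1$ of the up-down path correctly (the only places Lemma~\ref{lem:path} had to repair) is the delicate point to get right.
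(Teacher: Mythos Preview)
Your proposal is correct and follows essentially the same route as the paper: structural induction with the Boolean cases trivial, the $\ps$ case handled by splitting on whether $\varphi$ already holds at $T_{\Sigprime}$ and invoking Lemma~\ref{lem:diamGrz} in the nontrivial subcase, and the $\gamma$ case handled via Lemma~\ref{lem:path} to obtain a strictly-stepping up-down path that survives the cut. Your explicit appeal to Lemma~\ref{lem:quotient} to translate between satisfaction in $\mathcal{M}_{\Sigprime}$ and membership in $T$ (needed to meet the hypotheses of Lemma~\ref{lem:diamGrz}) is a detail the paper leaves implicit.
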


\begin{proof}
Proceed by induction on $\varphi$.
Atomic proposition and Boolean cases are immediate and thus omitted.

Consider the formula $\ps \varphi$.

We first show the right-to-left implication. Suppose $\cut, T_{\Sigprime} \models \ps \varphi$. Then there is $S_{\Sigprime} \seq^{{\rm cut}} T_{\Sigprime}$ such that $\cut, S_{\Sigprime} \models \varphi$. By the definition of $\peq^{{\rm cut}}$ we obtain $T_{\Sigprime} \peq S_{\Sigprime}$ and by the inductive hypothesis we have $\mathcal{M}_{\Sigprime}, S_{\Sigprime} \models \varphi$. Then $\mathcal{M}_{\Sigprime}, T_{\Sigprime} \models \ps \varphi$.

Now for the left-to-right implication. We have that $\mathcal{M}_{\Sigprime}, T_{\Sigprime} \models \ps \varphi$. Then we have two cases:
\begin{itemize}
	\item $\mathcal{M}_{\Sigprime}, T_{\Sigprime} \models \varphi$: 
	by the IH, $\cut, T_{\Sigprime} \models 
	\varphi$ and hence $\cut, T_{\Sigprime} \models \ps \varphi$.
	\item $\mathcal{M}_{\Sigprime}, T_{\Sigprime} \not \models \varphi$: 
	 by Lemma~\ref{lem:diamGrz}
	there is $S_{\Sigprime} \succ_{\Sigprime} T_{\Sigprime}$ with
	$\mathcal{M}_{\Sigprime}, S_{\Sigprime} \models \varphi$. 
	Again by the IH, $\cut, S_{\Sigprime} \models
	\varphi$ and it follows that $\cut, T_{\Sigprime} \models \ps \varphi$. 
\end{itemize}

Consider now the case of a formula $\reach \varphi\psi$.

The right-to-left implication is trivial, as $T_{\Sigprime} \peq^{{\rm cut}} S_{\Sigprime} \implies T_{\Sigprime} \peq_{\Sigprime} S_{\Sigprime}$, and hence any up-down path in $\cut$ witnessing $\reach \varphi\psi$ is also an up-down path in $\mathcal{M}_{\Sigprime}$ witnessing, by IH,  $\reach \varphi\psi$.

As for the left-to-right implication, if $\mathcal{M}_{\Sigprime}, T_{\Sigprime} \models \reach\varphi\psi$, by Lemma~\ref{lem:path} there exists an up-down path $(v_0,v_1,\ldots,v_{k})$ in $\mathcal{M}_{\Sigprime}$ witnessing $\reach \varphi \psi$ such that for all $i<k$, either $v_i = v_{i+1}$, $v_i\prec_{\Sigprime} v_{i+1}$, or  $v_i\succ_{\Sigprime} v_{i+1}$.
It is  immediate from the latter that this is also an up-down path in $\cut$. Using the IH, we have $\cut, T_{\Sigprime} \models \reach \varphi \psi$.
\end{proof}

\medskip

With this, we obtain completeness for finite poset models.

\begin{proposition}\label{propGrz}
If $\varphi$ is valid over the class of all finite poset models, then ${\sf PLR}\vdash \varphi$.
\end{proposition}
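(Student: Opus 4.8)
The plan is to argue by contraposition: assuming ${\sf PLR}\not\vdash\varphi$, I would produce a finite poset model on which $\varphi$ fails, contradicting the hypothesis that $\varphi$ is valid over all finite poset models. First, since ${\sf PLR}\not\vdash\varphi$, the singleton $\{\neg\varphi\}$ is ${\sf PLR}$-consistent, so by a standard Lindenbaum argument it extends to a ${\sf PLR}$-theory $T\in W_c$ with $\neg\varphi\in T$, equivalently $\varphi\notin T$. This point $T$ will be the seed of the refuting model.

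Next I would bring in the finiteness machinery. By Lemma~\ref{l:adequate-finite} there is a finite adequate set $\Sigma$ with $\mathrm{sub}(\varphi)\subseteq\Sigma$, and then the enlarged set $\Sigprime$ is finite as well. Filtering the canonical model for ${\sf PLR}$ through $\Sigprime$ yields the finite model $\mathcal{M}_{\Sigprime}$. Since $\varphi\in\Sigma\subseteq\Sigprime$, Lemma~\ref{lem:quotient} applies and gives $\mathcal{M}_{\Sigprime}, T_{\Sigprime}\models\varphi\iff\varphi\in T$; as $\varphi\notin T$, I conclude $\mathcal{M}_{\Sigprime}, T_{\Sigprime}\not\models\varphi$.

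The model $\mathcal{M}_{\Sigprime}$ is in general only a preorder, so the last step is to pass to $\cut={\rm cut}(\mathcal{M}_{\Sigprime})$, which is a genuine finite poset model. Because $\varphi\in\Sigma$, Lemma~\ref{lem:cutmodel} yields $\mathcal{M}_{\Sigprime}, T_{\Sigprime}\models\varphi\iff\cut, T_{\Sigprime}\models\varphi$, and hence $\cut, T_{\Sigprime}\not\models\varphi$. Thus $\cut$ is a finite poset model refuting $\varphi$, contradicting the assumed validity; therefore ${\sf PLR}\vdash\varphi$.

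As for where the difficulty lies, essentially all of the genuine work has already been discharged by the preceding lemmas, so the proposition itself is a short assembly. The one delicate point—cutting the filtrated preorder down to a poset while preserving the truth value of $\varphi$, in particular the behaviour of the reachability modality $\gamma$ under the cut—is exactly the content of Lemma~\ref{lem:cutmodel}, which in turn relies on Lemma~\ref{lem:diamGrz} for the $\ps$ case and on Lemma~\ref{lem:path} for the $\gamma$ case. Granting these, the only thing left to verify here is the bookkeeping that $\varphi$ lies in both $\Sigma$ and $\Sigprime$, so that lemmas~\ref{lem:quotient} and~\ref{lem:cutmodel} are genuinely applicable to it.
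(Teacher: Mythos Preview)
Your proposal is correct and follows essentially the same contrapositive argument as the paper: pick a theory $T$ containing $\neg\varphi$, filter the canonical model through a finite adequate set, apply Lemma~\ref{lem:quotient} and then Lemma~\ref{lem:cutmodel} to obtain a refutation in the finite poset model $\cut$. Your version is slightly more explicit about the bookkeeping (that $\varphi\in\Sigma\subseteq\Sigprime$ so both lemmas apply), but the strategy is identical.
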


\begin{proof}
Arguing by contraposition, if $\neg\varphi$ is consistent with $\sf PLR$ then there is a theory $T$ with $\neg\varphi\in T$. Let $\Sigma$ be the smallest adequate set containing $\neg\varphi$, which is finite by Lemma~\ref{l:adequate-finite}. Then Lemma~\ref{lem:quotient} together with Lemma~\ref{lem:cutmodel} yield $\cut, T_{\Sigprime} \models \neg\varphi$, in other words $\cut, T_{\Sigprime} \not\models \varphi$. Since $\cut$ is a finite poset, the proof is finished.
\end{proof}

\medskip

Putting together Proposition~\ref{propGrz}, Proposition~\ref{prop:sound} and Theorem~\ref{theoKripkePoly}, we obtain our main result.

\begin{theorem}
Given a formula $\varphi$, the following are equivalent:

\begin{enumerate}

\item ${\sf PLR}\vdash\varphi$;

\item $\varphi$ is valid over the class of all finite poset models;

\item $\varphi$ is valid over the class of all polyhedral models.

\end{enumerate}
\end{theorem}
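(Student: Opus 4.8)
The plan is to establish the three-way equivalence as a cycle of implications built entirely from results already in hand. Since each ingredient has been proved, the task is one of assembly rather than new argument; the only point demanding genuine care is the passage between the \emph{satisfiability} formulation of Theorem~\ref{theoKripkePoly} and the \emph{validity} formulation required here, which I handle by contraposition.

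First I would prove $(1)\Rightarrow(2)$. This is precisely the soundness half of Proposition~\ref{prop:sound}, which asserts that $\mathsf{PLR}$ is sound for the class of finite poset models; hence any $\mathsf{PLR}$-theorem $\varphi$ is valid over that class. Next I would establish $(2)\Leftrightarrow(3)$ from Theorem~\ref{theoKripkePoly}. That theorem is phrased in terms of satisfiability, so I would restate it contrapositively, recalling that $\varphi$ is valid over a class of models exactly when $\neg\varphi$ is not satisfiable over that class. Applying Theorem~\ref{theoKripkePoly} to the formula $\neg\varphi$ yields that $\neg\varphi$ is satisfiable on some finite poset model iff it is satisfiable on some polyhedral model; negating both sides gives that $\varphi$ is valid over finite poset models iff it is valid over polyhedral models, which is exactly $(2)\Leftrightarrow(3)$.

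Finally I would close the cycle with $(2)\Rightarrow(1)$, the completeness direction supplied verbatim by Proposition~\ref{propGrz}: if $\varphi$ is valid over all finite poset models, then $\mathsf{PLR}\vdash\varphi$. Combining $(1)\Rightarrow(2)$, $(2)\Rightarrow(1)$, and $(2)\Leftrightarrow(3)$ shows that the three statements are mutually equivalent.

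I do not anticipate any substantive obstacle, since the hard work has already been carried out upstream: the finite model property for $\mathsf{ALR}$ via filtration of the canonical model, its upgrade to $\mathsf{PLR}$ through the cluster-cutting construction $\mathcal{M}^{\rm cut}_{\Sigprime}$, and the nerve construction linking Kripke and polyhedral semantics. The residual difficulty is purely bookkeeping, namely keeping the direction of the satisfiability/validity dualization straight when invoking Theorem~\ref{theoKripkePoly}, so that the equivalence $(2)\Leftrightarrow(3)$ is drawn correctly from a statement about refutability.
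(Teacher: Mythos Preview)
Your proposal is correct and matches the paper's own proof, which simply states that the result follows by putting together Proposition~\ref{prop:sound}, Proposition~\ref{propGrz}, and Theorem~\ref{theoKripkePoly}. Your explicit unpacking of the satisfiability/validity dualization for $(2)\Leftrightarrow(3)$ is the only additional detail, and it is handled correctly.
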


\section{Conclusions}\label{sec:conc}

Polyhedral semantics of modal logics with reachability operators enables reasoning about many interesting real-world scenarios. In this work, we provided sound and complete axiomatization of the polyhedral reachability logic. 

\vspace{2mm}

There are many different directions for future research from both theoretical and practical perspectives. Here we focus on some theoretical questions. Having obtained axiomatization and polyhedral completeness for modal logic with reachability, as the next step one could study more general topological completeness for such logics. On general topological models some of the theorems of ${\sf PLR}$ are no longer valid. Namely, $\ps$ is no longer definable through $\gamma$. This makes the axiomatization more challenging. Natural  follow up problems are axiomatizing the reachability logics for the class of all topological spaces and for other interesting classes of spaces such as Euclidean spaces, hereditarily irresolvable spaces, scattered spaces, locally connected spaces, etc.

The proposed axiomatisation may also have an impact on applications. Indeed, in light of the results obtained in~\cite{CGLMV23,CGGLMV23,BCGJLMV24}, the proposed axioms can be used both for automated reasoning and for model checking, in order to perform minimisation of formulas: a way to do this is to mix the existing model checking approach with equality saturation tools (see e.g.~\cite{WNWYFTP21}).

\medskip

\textbf{Acknowledgements} We thank Mamuka Jibladze and Evgeny Kuznetsov for fruitful discussions about the polyhedral semantics of reachability modality. The research was supported by the bilateral project between CNR (Italy) and SRNSF (Georgia) “Model Checking for Polyhedral Logics” (grant \#CNR\nobreakdash\nobreakdash-22\nobreakdash-010), by the Shota Rustaveli National Science Foundation of Georgia grant \#FR\nobreakdash-22\nobreakdash-6700, by European Union - Next GenerationEU - National Recovery and Resilience
Plan (NRRP), Investment 1.5 Ecosystems of Innovation, Project “Tuscany
Health Ecosystem” (THE), CUP: B83C22003930001 and by 
European Union - Next-GenerationEU - National Recovery and Resilience
Plan (NRRP) – Mission 4 Component 2, Investment N. 1.1, CALL PRIN 2022
D.D. 104 02-02-2022 – (Stendhal) CUP N. B53D23012850006.

\bibliographystyle{aiml}
\bibliography{reachability.bib}

\end{document}